\newcommand{\cv}[1]{} 
\newcommand{\av}[1]{#1} 

\cv{
\documentclass[twoside,leqno,twocolumn]{article}

\usepackage[letterpaper]{geometry}

\usepackage{ltexpprt}
}

\av{\documentclass[letter]{article}
  \usepackage[totalwidth=420pt,totalheight=625pt]{geometry}
\date{}
}

\usepackage[breaklinks]{hyperref}

\usepackage{xspace}
\usepackage{enumerate}
\usepackage[shortlabels]{enumitem}
\usepackage{multirow}
\usepackage{multicol}
\usepackage{todonotes}
\usepackage[numbers]{natbib}
\usepackage{booktabs}
\usepackage{caption}
\usepackage{subcaption}
\usepackage{lastpage}
\usepackage{thmtools}

\av{\usepackage{amsthm}}

\av{
\newtheorem{theorem}{Theorem}[section]
\newtheorem{lemma}{Lemma}[section]

\newtheorem{Definition}{Definition}[section]
}

\usepackage{tikz}
\usetikzlibrary{decorations.pathmorphing,arrows.meta,fit,calc}
\tikzstyle{node}=[fill=white, draw=black, shape=rectangle, minimum size = 0.4cm]
\tikzstyle{circ}=[fill=white, draw=black, shape=ellipse, minimum size = 0.4cm]
\pgfdeclarelayer{bg}    
\pgfsetlayers{bg,main}  

\usepackage{algorithm}
\usepackage[noend]{algpseudocode}

\usepackage{amsmath}
\usepackage{amssymb}
\usepackage{mathtools}
\usepackage{bbm}
\usepackage{bm}
\usepackage{nicefrac}
\usepackage[noabbrev]{cleveref} 

\newtheorem{example}{Example}
\newtheorem{reduction}{Reduction}

\usepackage{stmaryrd} 
\usepackage{latexsym} 

\newcommand{\DAGger}{\textsc{DAGer}\xspace}
\newcommand{\true}{\textsc{true}}
\newcommand{\false}{\textsc{false}}
\newcommand{\SCC}{\ensuremath{\textsc{SCC}_{||}}}

\newcommand{\SB}{\{\,}%
\newcommand{\SM}{\mid}
\newcommand{\SE}{\,\}}%
\newcommand{\Card}[1]{|#1|}

\usepackage{xcolor}
\colorlet{MyBlue}{blue!50!black!100!}
\colorlet{MyRed}{red!50!black!100!}

\newcommand*{\tikzmk}[1]{\hspace{1em}\tikz[remember picture,overlay,] \node (#1) {};\ignorespaces}
\newcommand{\boxit}[1]{\tikz[remember picture,overlay]{\node[yshift=3pt,fill=#1,opacity=.25,fit={(A)($(B)+(.65\linewidth,-0.2\baselineskip)$)}] {};}\ignorespaces}
\colorlet{MyYellow}{yellow!100}
\colorlet{MyCyan}{cyan!60}

\setlist[itemize]{leftmargin=9pt, noitemsep, topsep=0pt}

\begin{document}

\newcommand\relatedversion{}
\renewcommand\relatedversion{\thanks{The full version of the paper can be accessed at \protect\url{https://arxiv.org/abs/1902.09310}}} 
\cv{
\todo{Add arxiv version}
}

\cv{
\title{A Dynamic MaxSAT-based Approach to Directed Feedback Vertex Sets}
\author{Rafael Kiesel\thanks{Institute of Logic and Computation, TU Wien, Vienna, Austria. The authors acknowledge the support from the Austrian Science Fund (FWF), projects P32441, W1255, and W1255-N23, and from the WWTF, project ICT19-065.} \and Andr\'{e} Schidler\footnotemark[1]}
}

\av{
\title{A Dynamic MaxSAT-based Approach to Directed Feedback Vertex Sets\thanks{The authors acknowledge the support from the Austrian Science Fund (FWF), projects P32441, W1255, and W1255-N23, and from the WWTF, project ICT19-065.}}
\author{%
Rafael Kiesel\\[4pt]
\small Institute of Logic and Computation\\[-3pt]
\small TU Wien, Vienna, Austria\\[-3pt] 
\small \texttt{r.kiesel@tuwien.ac.at}
\and
Andr\'{e} Schidler\\[4pt]
\small Algorithms and Complexity Group\\[-3pt]
\small TU Wien, Vienna, Austria\\[-3pt] 
\small \texttt{aschidler@ac.tuwien.ac.at}
}
}

\date{}

\maketitle

\cv{
\fancyfoot[R]{\scriptsize{Copyright \textcopyright\ 2023 by SIAM\\
Unauthorized reproduction of this article is prohibited}}
}




\begin{abstract}
    We propose a new approach to
    the \emph{Directed Feedback Vertex Set Problem (DFVSP)}, where the 
    input is a directed graph and the solution
    is a minimum set of vertices whose removal makes
    the graph acyclic.
    
Our approach, implemented in the solver \DAGger, is based on two novel contributions:
Firstly, we add a wide range of data reductions that are partially inspired by reductions for the similar vertex cover problem. For this, we give a theoretical 
basis for lifting reductions from vertex cover to DFVSP but also incorporate novel ideas into strictly more general and new DFVSP reductions.

Secondly, we propose dynamically encoding DFVSP in propositional logic using \emph{cycle propagation} for improved performance. Cycle propagation builds on the idea that already a limited
number of the constraints in a propositional encoding is usually sufficient for finding an optimal solution.
Our algorithm, therefore, starts with a small number of constraints
and cycle propagation adds additional constraints when
necessary.
We propose an efficient integration of cycle propagation into the workflow of MaxSAT solvers, further improving the performance of 
our algorithm.
		
	Our extensive experimental evaluation shows that
\DAGger significantly outperforms the state-of-the-art solvers
	and that our data reductions alone directly solve many of the instances.
\end{abstract}


\section{Introduction}
The \emph{Directed Feedback Vertex Set Problem (DFVSP)} is one of Karp's original 21 \textsc{NP}-complete problems~\cite{karp1972reducibility} and has a wide range of applications
such as for argumentation frameworks~\cite{dvorak12, dvorak2022}, deadlock detection, program verification and VLSI chip design~\cite{silberschatz2018operating}.
The problem is to find for a given directed graph a set of vertices---a Directed Feedback Vertex Set (DFVS)---such that every directed cycle uses at least one of the DFVS' vertices. 
In this paper, we consider the optimization version of the problem,
where we search for a minimum DFVS of the graph.

While the problem is known to be fixed parameter tractable~\cite{chen2008fixed} in the
size of the minimum DFVS, these algorithms do not help us in practice~\cite{fleischer2009experimental}, where the size of a minimum DFVS can become prohibitively large. Other possible approaches include explicit branch-and-bound solvers~\cite{lin2000computing}, as well as encodings into a standard format for optimization problems, however, also these do not manage to solve all practically relevant instances~\cite{lin2000computing}.

We attack this problem from two directions.
First, we use knowledge from the \emph{Vertex Cover Problem (VCP)} where the key to success is the application of \emph{data reductions}, which results in
a smaller graph that allows for 
easier solving. The reductions' transfer from VCP to DFVSP is possible, since, VCP is a special case of DFVSP.
And, although many DFVSP instances do not match this special case,
we give a theoretical basis that allows
us to lift many data reductions from VCP to DFVSP.
We use the fact that for these data reductions it is often 
sufficient if the DFVSP instance behaves \emph{locally} like a VCP instance.
We use this to lift several VCP data reductions to DFVS and,
additionally, introduce several more general data reductions.

Our second advancement revisits the idea of using a general purpose optimization solver for DFVSP.
We consider Maximum Satisfiability (MaxSAT) solvers as they
build on top of Propositional Satisfiability (SAT) solvers, which
are designed for efficiently handling binary decisions,
in our case inclusion or exclusion of a vertex, and have
seen tremendous advances in the last decade.

The problem we face, when encoding DFVSP in propositional logic,
is that the resulting formula quickly becomes
prohibitively large.
A straightforward well-performing encoding has size cubic in the number of vertices~\cite{janota2017}.
Worse, the size of the encoding that is most suitable for our purposes can become exponential
in the number of vertices: we list all cycles and state
that any solution must contain at least one vertex from every cycle.

Fortunately, it is known that a subset of an instance's constraints
is often sufficient for finding a feasible solution for the whole
instance.
This is formalized in approaches like \emph{counter-example guided abstraction refinement (CEGAR)}~\cite{clarke2003}.
CEGAR starts with a subset of the constraints, called an under-abstraction,
and whenever the solver returns an infeasible solution, further constraints are added for the next solver run, until
a feasible solution is obtained.
We propose an even faster approach: by implementing \emph{cycle propagation} directly into the MaxSAT solver,
it is possible to immediately add necessary constraints
whenever they are needed, instead of waiting for the solver to finish.
The solver then immediately corrects its decision and
never returns an infeasible solution.

Our implementation won PACE~2022~\cite{pace2022}.

\subsection{Contributions}
Our main contributions are as follows:
\begin{enumerate}[leftmargin=9pt,itemsep=0pt]
    \item We found a general condition that allows us to lift data reductions from the vertex cover problem to the DFVSP problem.  
    \item For many reductions, we went even further and provided non-trivial generalizations of vertex cover reductions that are applicable in a wide range of cases.
    \item We show that implementing cycle propagation is highly beneficial for MaxSAT-based DFVSP-solving, especially since it is tightly integrated into the MaxSAT solver.
    \item In terms of data reductions, our experimental evaluation reveals that our novel reductions can significantly reduce instance sizes, especially on structured instances. 
    \item In terms of cycle propagation, the experiments showed a significant speedup, which is even larger when the cycle propagation happens within the MaxSAT solver.
\end{enumerate}
Overall, the theoretical innovations of this work together with the highly engineered MaxSAT solvers as a basis provide a new and highly efficient strategy of exactly solving DFVSP instances that significantly outperforms any currently available implementation.

\subsection{Related Work}
CEGAR has been successfully used for various other problems such as
Hamiltonian cycles~\cite{soh2014}, graph coloring~\cite{glorian2019}, propositional circumscription~\cite{Janota10},
SMT solving~\cite{brummayer2009},
QBF Solving~\cite{Janota16}, and encoding other PSPACE-hard problems~\cite{Leberre17,Seipp18}.
While CEGAR approaches are sometimes integrated inside the solver (e.g., for SMT solving~\cite{brummayer2009}),
the abstraction refinement is usually added as an extra layer on top of the (Max)SAT solver.
The idea of generating the extra clauses during the solving, before the solver returns the
assignment, is facilitated using \emph{propagators}, which allow for adding extra logic
at different points of the solver's solving process.
This has been successfully used for SMT solving~\cite{brummayer2009}
and dynamic symmetry breaking~\cite{Kirchweger21}.
To our knowledge, cycle propagation (\Cref{sec:maxsat}) is the first integrated 
approach for solving a combinatorial optimization problem.

Regarding DFVSP solving, there has been previous work on data reductions~\cite{koehler2005contraction,lemaic2008markov,levy1988contraction,lin2000computing}, approximate solvers~\cite{even1998approximating,zhou2016spin}, and exact solvers~\cite{bao2018analysis,fages2006constraint,funke1996polyhedral}. Especially for the latter there has been a large increase recently, since DFVSP was the topic of this years PACE Challenge~\cite{pace2022}. Here, many approaches were also based on a CEGAR-like method, i.e., repeatedly
refining a small set of constraints, but using
integer linear programming solvers instead of MaxSAT solvers.
Nevertheless, to the best of our knowledge both our data reductions as well as the idea of using cycle propagation \emph{within} a solver for DFVSP are novel.

\section{Preliminaries}\label{sec:prelims}
\paragraph{(Di)Graphs}
We consider both undirected and directed graphs (digraphs). For a (di)graph $G$, we denote by $V(G)$ its vertices and by $E(G)$ its edges (or arcs if $G$ is a digraph).
Further, we denote an (undirected) edge between vertices $u$ and $v$ as $\{u, v\}$ and the arc from $u$ to $v$ as $(u, v)$. If for an arc $(u,v) \in E(G)$ also the arc in the other direction is present, i.e., $(v,u) \in E(G)$, then we call it a bi-edge, and an arc $(u, u)$ is called a \emph{loop}. The neighbors of a vertex in a graph $G$ are denoted by $N^G(v)$ and for a digraph we use $N^G_{pre}(v), N^G_{succ}(v), N^G_{bi}(v), N^G(v)$ for the set of predecessors, successors, their intersection and their union, respectively. If $G$ is clear from the context, we may omit the superscript.
In a (di)graph a (di)clique is a set of vertices $S$ where each $v \in S$ satisfies $S \setminus \{v\} \subseteq N(v)$.

In the later sections we need some operations to modify (di)graphs. We define 
\begin{itemize}[leftmargin=9pt,itemsep=0pt]
    \item $G - S$ as the (di)graph obtained by removing a set $S$ of vertices or edges from a (di)graph $G$. If $S$ is a singleton set $\{v\}$ or $\{(u,v)\}$ we may omit the brackets and write $G - v$ or $G - (u,v)$.
    \item $G + S$ is defined analogously but adds vertices or edges.
    \item $G[S]$ as the induced sub(di)graph of $G$ with respect to a set of vertices $S$, i.e., $V(G[S]) = S$ and $E(G[S]) = E(G) \cap S\times S$.
    \item $\Pi(G)$ for a digraph $G$ as the graph with $V(\Pi(G)) = V(G)$ and $E(\Pi(G)) = \{ \{u,v\} \SM (u,v), (v,u) \in E(G)\}$.
\end{itemize} 
\bigskip
\paragraph{DFVSP}
We can now introduce the central problem addressed in this paper.
\begin{Definition}[Cycle, DFVS]
Given a digraph $G$ a path is a list of vertices $v_1, \dots, v_n$ such that for $i = 1, \dots, n-1$ there exists an arc $(v_i, v_{i+1}) \in E(G)$. A cycle is a path $v_1, \dots, v_n$ such that $v_1 = v_n$. Furthermore, a path (or cycle) is 
    uncovered, if there is no cycle $v_1', \dots, v_m'$ such that $\SB v_i' \SM i = 1, \dots, m\SE \subsetneq \SB v_i \SM i = 1, \dots, n\SE$
and the \emph{length} of a cycle is the number of distinct vertices in the cycle.
$\mathcal{C}(G)$ refers to the set of all uncovered cycles in $G$. 

A \emph{Directed Feedback Vertex Set (DFVS)} of $G$ is a set $D \subseteq V$ such that every cycle of $G$ contains at least one vertex in $D$. A minimum DFVS is a DFVS of minimum cardinality. We denote by $DFVS(G)$ the minimum cardinality of any DFVS of $G$.
\end{Definition}
An example of a DFVSP-instance with a solution is given in Figures~\ref{fig:example:graph} and~\ref{fig:example:dag}.
As an example for covered and uncovered cycles, consider the cycle $a,c,b,a$.
The cycle is covered, as vertices of both (uncovered) cycles $a,b,a$ and $c,b,c$ are proper
subsets of $\SB a, b, c \SE$.

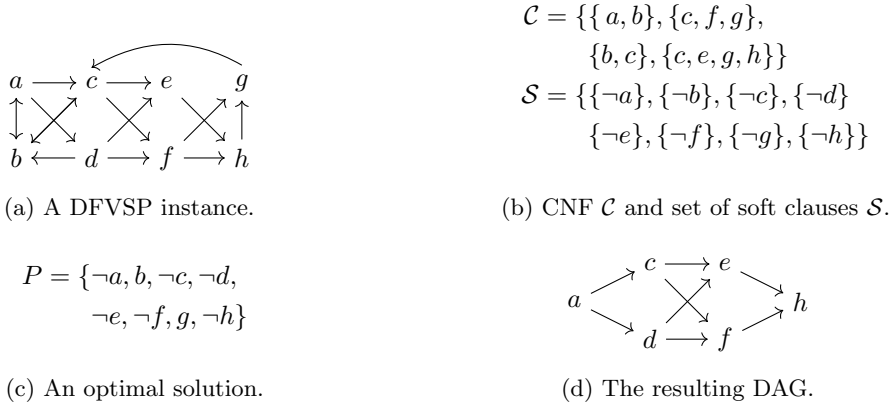
\begin{figure}
\begin{subfigure}[b]{0.39\linewidth}
    \centering
        \begin{tikzpicture}
        \node (a) at (0, 0) {$a$};
        \node (b) at (0, -1) {$b$};
        \node (c) at (1, 0) {$c$};
        \node (d) at (1, -1) {$d$};
        \node (e) at (2, 0) {$e$};
        \node (f) at (2, -1) {$f$};
        \node (g) at (3, 0) {$g$};
        \node (h) at (3, -1) {$h$};
        \draw[<->] (a) -- (b);
        \draw[->] (a) -- (c);
        \draw[->] (c) -- (b);
        \draw[->] (b) -- (c);
        \draw[->] (a) -- (d);
        \draw[->] (c) -- (e);
        \draw[->] (c) -- (f);
        \draw[->] (d) -- (e);
        \draw[->] (d) -- (f);
        \draw[->] (d) -- (b);
        \draw[->] (e) -- (h);
        \draw[->] (f) -- (h);
        \draw[->] (f) -- (g);
        \draw[->] (h) -- (g);
        \draw[->] (g.north) to [out=150,in=30]  (c.north);
    \end{tikzpicture}
    \caption{A DFVSP instance.}
    \label{fig:example:graph}
\end{subfigure}
\begin{subfigure}[b]{0.59\linewidth}
\begin{align*}
    \mathcal{C} =  \{ &\SB a, b\},  \{ c, f, g \}, \\
    & \{ b, c \}, \{ c, e, g, h \} \} \\
    \mathcal{S} = \{ &\{\lnot a\}, \{\lnot b\}, \{\lnot c\}, \{\lnot d\}  \\
    &\{\lnot e\}, \{\lnot f\}, \{\lnot g\}, \{\lnot h\} \}
\end{align*}
\caption{CNF $\mathcal{C}$ and set of soft clauses $\mathcal{S}$.}
\label{fig:example:encoding}
\end{subfigure}
\begin{subfigure}[b]{0.4\linewidth}
\begin{align*}
P = \{ &\lnot a, b, \lnot c , \lnot d, \\ & \lnot e, \lnot f, g, \lnot h \}
\end{align*}
\caption{An optimal solution.}
\label{fig:example:model}
\end{subfigure}
\begin{subfigure}[b]{0.58\linewidth}
\centering
        \begin{tikzpicture}
        \node (a) at (0, -0.5) {$a$};
        \node (c) at (1, 0) {$c$};
        \node (d) at (1, -1) {$d$};
        \node (e) at (2, 0) {$e$};
        \node (f) at (2, -1) {$f$};
        \node (h) at (3, -0.5) {$h$};
        \draw[->] (a) -- (c);
        \draw[->] (a) -- (d);
        \draw[->] (c) -- (e);
        \draw[->] (c) -- (f);
        \draw[->] (d) -- (e);
        \draw[->] (d) -- (f);
        \draw[->] (e) -- (h);
        \draw[->] (f) -- (h);
    \end{tikzpicture}
    \caption{The resulting DAG.}
    \label{fig:example:dag}
\end{subfigure}
\caption{Example of a DFVSP instance, a solution, and its encoding as a MaxSAT instance.}
    \label{fig:example}
\end{figure}

We introduce problems (Vertex Cover, Propositional Satisfiability)
and relate them to DFVSP.
Throughout this paper,
we make this correspondence clearer
by using similar names for corresponding objects,
e.g., $C_i$ for a clauses in propositional logic as they
correspond to cycles.

Whenever each uncovered cycle has length 2 and using $\Pi(G)$,
we can state DFVSP as follows:
\begin{Definition}[Vertex Cover]
Let $G$ be an undirected graph. A minimum Vertex Cover (minimum VC) is a set $D \subseteq V(G)$, such
that $D$ is of minimum cardinality and for each edge $\{u, v\} \in E(G)$ it holds that $\{u,v\} \cap D \neq \emptyset$. $VC(H)$ denotes the minimum cardinality over all VCs of $H$.
\end{Definition}

\paragraph{Propositional Logic and SAT solvers}
We give a brief introduction to propositional logic, SAT solvers, and MaxSAT solvers and refer the interested reader to~\cite{sathandbook2021} for more details.

We use propositional formulas in Conjunctive Normal Form (CNF). A CNF $\mathcal{C}$, defined for a set $V$ of variables, is a finite conjunction of \emph{clauses} $C_i$, where each clause consists of a finite disjunction of \emph{literals} $\ell \in \{v, \lnot v\}$ for some $v \in V$. 
We use the equivalency $\lnot\lnot v = v$ and say a literal $\ell$ is \emph{negative} if it is of the form $\ell = \lnot v$ and \emph{positive}
if $\ell = v$.
Further, we denote clauses as a set of literals.

We represent truth assignments as a subset of $V \cup \SB \lnot v \SM v \in V \SE$ that for any $v \in V$ do not contain both $v$ and $\lnot v$.
Here, a positive literal indicates the value true, a negative literal the value false,
and if a variable does not occur in the assignment, it isn't assigned a value (yet).
We use the standard satisfaction relation and call an assignment that satisfies a formula a \emph{model}.

A \emph{propositional satisfiability (SAT) solver} takes as input a CNF and
checks whether it has a model and returns one, if so.
\emph{MaxSAT} solvers additionally search for a model that optimizes an objective function.
We use \emph{partial MaxSAT}, where
the solver takes two CNFs, called the \emph{hard} and \emph{soft} clauses.
Then, a MaxSAT model is a model of the hard clauses that satisfies as many soft clauses as possible.

Consequently, we encode DFVSP as a MaxSAT instance by using $V(G)$
as the variables and adding for each uncovered cycle $v_1\dots v_n \in \mathcal{C}(G)$ a corresponding hard clause $\SB v_1, \dots, v_n\SE$. Thus, if $v_i$ is true it is in the DFVS. Accordingly, to achieve minimum cardinality we add a soft clause $\{\lnot v\}$ for each $v \in V(G)$.
A MaxSAT model then maximizes the elements that are not in the DFVS,
effectively minimizing the elements that are in the solution.
In Figures~\ref{fig:example:encoding} and~\ref{fig:example:model}
we see an example of a MaxSAT encoding and the corresponding model.

\section{Solver Architecture and Outline}
We give a brief overview of our algorithm
which also determines the structure of the paper.
The general algorithm is shown in \Cref{alg:main-algo}.

\begin{algorithm}[tbh!]
\begin{algorithmic}[1]
    \State $G \gets $ \Call{Reduce}{$G$}
    \State $\mathcal{C}' \gets$ \Call{FindShortCycles}{$G$}
    \If{$\mathcal{C}' = \mathcal{C}(G)$}
        \State $G, \mathcal{C}' \gets $ \Call{ReduceWithAllCycles}{$G$, $\mathcal{C}'$}
    \EndIf
    \State $D \gets $\Call{DFVS\_MaxSAT}{$G, \mathcal{C}'$}
    \State \Return $D$
\end{algorithmic}
\caption{The complete algorithm.}
\label{alg:main-algo}
\end{algorithm}

The algorithm starts with performing general data reductions,
discussed in \Cref{sec:reductions} and then searches
for a small set of short cycles.

Short cycles are cycles with a specified maximum length.
Bounded depth first search can easily find these short cycles,
as long as the bound, i.e., cycle length, is small enough.
We discuss the associated limits on both the cycle length and
the number of cycles in the implementation details in \Cref{sec:experiments}.

Should the set of short cycles represent all uncovered cycles,
we can apply additional data reductions, which are also discussed in \Cref{sec:reductions}.
We increase our chances of finding all uncovered cycles by ignoring the
aforementioned limits and incrementally extending the maximum length of
a cycle, as long as the number of new cycles we find decreases monotonically.

A MaxSAT solver is then used to solve the reduced instance.
The set $\mathcal{C}' \subseteq \mathcal{C}(G)$ is
given to the solver as the initial set of constraints.
In case $\mathcal{C}'$ contains all uncovered cycles,
the MaxSAT solver simply computes the minimum DFVS.
When $\mathcal{C}'$ is not complete, the cycle propagation
discussed in \Cref{sec:maxsat} ensures that
the solver still returns a valid DFVS.

We evaluate how effective our solver is in \Cref{sec:experiments}.

\section[Data Reductions]{Data Reductions%
\av{\footnote{Proofs for all theorems are in \Cref{sec:proofs}}}%
\cv{\footnote{Proofs for all theorems are in the long-version.}}
}
\label{sec:reductions}
Data reductions are the first step in our approach.
They aim to shrink the input digraph in such a way that 
a minimum DFVS for the reduced digraph can easily be extended 
to a minimum DFVS for the original digraph.
To this end, we apply a wide range of reduction rules. 
We give a list of all relevant reductions in \Cref{tab:reduction_rules}.

\begin{table*}[tbh]
    \centering
    \begin{tabular}{c c c}
        \toprule
        Name & Origin & Strictly Subsumed by \\
        \midrule
        LOOP & \cite{levy1988contraction} & - \\
        IN0/1 & \cite{levy1988contraction} & INDICLIQUE \\
        OUT0/1 & \cite{levy1988contraction} & OUTDICLIQUE \\
        INDICLIQUE & \cite{lemaic2008markov} & - \\
        OUTDICLIQUE & \cite{lemaic2008markov} & - \\
        DICLIQUE-2 & \cite{lemaic2008markov} & - \\
        DICLIQUE-3 & \cite{lemaic2008markov} & - \\
        PIE & \cite{lin2000computing} & ALLCYCLES \\
        DOME & \cite{lin2000computing} & ALLCYCLES \\
        DOME++ & - & ALLCYCLES \\
        ALLCYCLES & - & - \\
        CORE & \cite{lin2000computing} & IN/OUTDICLIQUE\\
        ``Reduction 2'' & \cite{stege1999improved} & SUBSET (DFVSP) \\
        SUBSET (VCP) & \cite{stege1999improved} & SUBSET (DFVSP)\\
        SUBSET (DFVSP) & - & -\\
        2FOLD & \cite{xiao2013confining} & MANYFOLD (DFVSP) \\
        ``Reduction 4'' & \cite{stege1999improved} & MANYFOLD (DFVSP) \\
        ``Reduction 5'' & \cite{stege1999improved} & MANYFOLD (DFVSP) \\
        ``Reduction 7.2'' & \cite{stege1999improved} & MANYFOLD (DFVSP) \\
        MANYFOLD (VCP) & \cite{fellows2018what} & MANYFOLD (DFVSP) \\
        MANYFOLD (DFVSP) & - & - \\
        4PATH (VCP) & \cite{fellows2018what} & 4PATH (DFVSP) \\
        4PATH (DFVSP) & - & - \\
        UNCONFINED (VCP) & \cite{xiao2013confining} & UNCONFINED (DFVSP) \\
        UNCONFINED (DFVSP) & - & - \\
        3EMPTY & \cite{stege1999improved} & - \\
        TWIN & \cite{xiao2013confining} & 3EMPTY + MANYFOLD (DFVSP) \\
        FUNNEL & \cite{xiao2013confining} & SUBSET + MANYFOLD (DFVSP) \\
        DESK & \cite{xiao2013confining} & - \\
        \bottomrule
    \end{tabular}
    \caption{A summary of all used (and some unused) reduction rules, their origin, and the (currently) most general rule subsuming it.}
    \label{tab:reduction_rules}
\end{table*}

\subsection{DFVSP Reductions}
For DFVSP there is already a wide range of rules by~\citet{levy1988contraction,lemaic2008markov,lin2000computing}, which we use in an unmodified manner. 
\av{For space reason, we do not repeat them here but refer to \Cref{app:reductions} for details.}
\cv{For space reasons, we do not repeat them here.}


Apart from this, we generalized DOME~\cite{lin2000computing} from arcs dominated by length 2 paths to arcs dominated by arbitrary length paths:
\begin{reduction}[DOME++] 
If there is an arc $(v,u) \in E(G)$ such that $(u,v) \not\in E(G)$ and (i) every path that starts at $v$ and ends at $u$ uses a bi-edge or a vertex from $N_{pre}(u) \setminus \{v\}$ or (ii) every path that starts at $u$ and ends at $v$ uses a bi-edge or a vertex from $N_{succ}(v) \setminus \{u\}$, then replace $G$ by $G - (v,u)$.
\end{reduction}
Additionally, while enumerating all uncovered cycles is generally not feasible, due to their 
potentially exponential number, it is often possible in practice. This allows the following reduction.
\begin{reduction}[ALLCYCLES]
If there is an arc $(v,u) \in E(G)$ such that every cycle that visits $v$ immediately after $u$ is covered, then replace $G$ by $G - (v,u)$.
\end{reduction}
\begin{theorem}
DOME++ and ALLCYCLES are sound if $G$ is loop-free.
\end{theorem}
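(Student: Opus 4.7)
\medskip
\noindent\textit{Proof proposal.} The plan is to show that both reductions preserve the minimum DFVS size, i.e.\ $DFVS(G-(v,u)) = DFVS(G)$. One direction is immediate: every cycle of $G-(v,u)$ is a cycle of $G$, so any DFVS of $G$ is automatically a DFVS of $G-(v,u)$, giving $DFVS(G-(v,u)) \leq DFVS(G)$. For the reverse direction, I would show that any DFVS $D$ of $G-(v,u)$ is also a DFVS of $G$. Arguing by contradiction, suppose some simple cycle $C$ of $G$ is uncovered by $D$; since $D$ covers every cycle of $G-(v,u)$, $C$ must use the arc $(v,u)$. Writing $C = v \to u \to w_1 \to \cdots \to w_k \to v$ with $k \geq 1$ (using loop-freeness and $(u,v)\notin E$ to rule out $k=0$), it suffices to exhibit a \emph{sub-cycle}, i.e.\ a cycle of $G - (v,u)$ whose vertex set lies inside $V(C)$: such a sub-cycle must then be uncovered by $D$, yielding the desired contradiction.

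For ALLCYCLES the sub-cycle is produced by an inductive descent. The hypothesis gives a cycle with strictly smaller vertex set than $C$; this smaller cycle may or may not still use $(v,u)$, and in the latter case the hypothesis is re-applied. Because $G$ is loop-free every cycle has at least two vertices, so the chain of strictly decreasing vertex sets must terminate with a cycle $C^{(n)}$ that does not use $(v,u)$ and satisfies $V(C^{(n)}) \subseteq V(C)$, i.e.\ a cycle of $G-(v,u)$ inside $V(C)$. For DOME++ case~(ii), the $u$-to-$v$ path $P = u \to w_1 \to \cdots \to w_k \to v$ embedded in $C$ is a path to which the hypothesis directly applies: either $P$ uses a bi-edge $(a,b)$, and then $a \to b \to a$ is a $2$-cycle in $G-(v,u)$ with $\{a,b\} \subseteq V(C)$ (the arc $(v,u)$ is not itself a bi-edge since $(u,v)\notin E$), or $P$ contains a vertex $w_i \in N_{succ}(v) \setminus \{u\}$, and then $v \to w_i \to w_{i+1} \to \cdots \to w_k \to v$ is a sub-cycle of $C$ in $G-(v,u)$ (its initial arc $(v,w_i)$ differs from $(v,u)$ because $w_i \neq u$).

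Case~(i) of DOME++ is then handled symmetrically by applying the hypothesis to a $v$-to-$u$ walk extracted from $C$ — the paper's definition of path allows vertex repetition, so one may take the walk $v \to u \to w_1 \to \cdots \to w_k \to v \to u$. A bi-edge along this walk again produces a $2$-cycle in $G-(v,u)$ inside $V(C)$, while a vertex $w_j \in N_{pre}(u) \setminus \{v\}$ encountered on the walk yields the sub-cycle $u \to w_1 \to \cdots \to w_j \to u$ in $G-(v,u)$. The main obstacle is precisely this case~(i): the cycle $C$ supplies a $u$-to-$v$ path naturally but no non-trivial $v$-to-$u$ path, so one must pin down an interpretation of ``path'' and identify a walk on which the hypothesis can bite. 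The assumptions $(u,v) \notin E$ and loop-freeness are exactly what ensure that the witness of the hypothesis falls among the interior $w_i$ of the cycle rather than degenerating at $v$ or $u$, so that the resulting sub-cycle lies genuinely inside $V(C)$.
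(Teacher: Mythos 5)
Your proof is correct in substance. Note first that the paper's appendix, despite the footnote, contains no proof of this particular theorem, so there is nothing to compare against; your skeleton --- reduce soundness to showing that every simple cycle through $(v,u)$ contains the vertex set of a cycle of $G-(v,u)$, then exhibit such a sub-cycle --- is the natural one, and one direction is indeed trivial since arc removal only destroys cycles. The ALLCYCLES descent is fine: the chain of strictly shrinking vertex sets must stop, it can only stop at a cycle avoiding $(v,u)$ (any cycle through $(v,u)$ is covered and hence admits a strictly smaller one), and loop-freeness guarantees the chain cannot degenerate below a $2$-cycle, which is necessarily uncovered. Your case~(ii) of DOME++ is likewise clean: by simplicity of $C$ the return path $u\to w_1\to\cdots\to w_k\to v$ lies in $G-(v,u)$, and either a bi-edge on it or a witness $w_i\in N_{succ}(v)\setminus\{u\}$ yields the sub-cycle. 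The one place to be more explicit is case~(i): as literally written, condition~(i) quantifies over \emph{all} $v$-to-$u$ paths, and the length-one path $v,u$ uses no bi-edge (since $(u,v)\notin E(G)$) and, by loop-freeness, no vertex of $N_{pre}(u)\setminus\{v\}$, so under the literal reading condition~(i) is unsatisfiable and the case is vacuous. Your walk $v\to u\to w_1\to\cdots\to w_k\to v\to u$ is a legitimate device for making the hypothesis apply, and the extracted sub-cycle $u\to w_1\to\cdots\to w_j\to u$ from a witness $w_j\in N_{pre}(u)\setminus\{v\}$ is exactly the right object; but the reading consistent with DOME's first clause (which constrains the predecessor of $v$ on the return path) is that condition~(i), like condition~(ii), quantifies over $u$-to-$v$ paths, only with witness set $N_{pre}(u)\setminus\{v\}$ --- under that reading case~(i) is strictly symmetric to your case~(ii) and the auxiliary walk is unnecessary. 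Finally, for ALLCYCLES you invoke $(u,v)\notin E(G)$ to rule out $k=0$, but that is not a stated precondition there; it does follow (a bi-edge would make $v,u,v$ an uncovered cycle through $(v,u)$ in a loop-free digraph, contradicting the ALLCYCLES hypothesis), and in any case your descent never actually uses $k\ge 1$.
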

Apart from the DFVSP reductions, we also lifted and generalized VCP reductions.

\subsection{VCP Reductions}
As noted already in the preliminaries, VCP and DFVSP are related. We formalized this intuition as follows:
\begin{lemma}
\label{lem:split}
Let $G$ be a digraph, then $S \subseteq V(G)$ is a DFVS if and only if 
 \begin{itemize}
    \item $S$ is a VC of $\Pi(G)$, and
    \item $S$ is a DFVS of $G - E(\Pi(G))$.
\end{itemize}
\end{lemma}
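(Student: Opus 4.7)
The plan is to prove both implications of the biconditional directly from the definitions of DFVS and VC, exploiting the fact that bi-edges of $G$ correspond exactly to the edges of $\Pi(G)$ and also to cycles of length $2$ in $G$. Before starting, I would fix the interpretation of $G - E(\Pi(G))$: since $E(\Pi(G))$ consists of undirected edges $\{u,v\}$ coming from bi-edges, I read $G - E(\Pi(G))$ as the digraph obtained from $G$ by deleting every arc $(u,v)$ such that $(v,u)$ is also in $E(G)$, i.e., by removing both orientations of every bi-edge.

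For the forward direction, assume $S$ is a DFVS of $G$. For the VC condition, pick any edge $\{u,v\} \in E(\Pi(G))$. By definition of $\Pi(G)$, both $(u,v)$ and $(v,u)$ lie in $E(G)$, so $u,v,u$ is a cycle in $G$; since $S$ hits every cycle of $G$, we have $\{u,v\} \cap S \neq \emptyset$, establishing $S$ as a VC of $\Pi(G)$. For the second condition, observe that every cycle of $G - E(\Pi(G))$ is also a cycle of $G$ (deleting arcs can only destroy cycles, not create new ones), so $S$ still hits every cycle, making it a DFVS of $G - E(\Pi(G))$.

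For the converse, assume $S$ is a VC of $\Pi(G)$ and a DFVS of $G - E(\Pi(G))$. Let $C = v_1, \dots, v_n$ be any cycle in $G$; I need to show $C$ contains a vertex of $S$. I would case-split on whether $C$ uses an arc of a bi-edge. If some arc $(v_i, v_{i+1})$ on $C$ is part of a bi-edge, then $\{v_i, v_{i+1}\} \in E(\Pi(G))$, and since $S$ is a VC of $\Pi(G)$, at least one of $v_i, v_{i+1}$ lies in $S \cap C$. Otherwise, every arc of $C$ survives in $G - E(\Pi(G))$, so $C$ is itself a cycle of $G - E(\Pi(G))$, and the DFVS property yields a vertex of $S$ on $C$.

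I do not expect a real obstacle: the argument is essentially a bookkeeping exercise, and the only subtlety is the interpretation of $G - E(\Pi(G))$ and whether to worry about loops. Loops $(u,u)$ are not bi-edges in the sense of $\Pi(G)$ and hence survive in $G - E(\Pi(G))$, so they are correctly handled by the DFVS condition on the second graph; no separate case is needed. If the convention elsewhere in the paper is different, I would add a one-line remark making explicit that $G - E(\Pi(G))$ strips out both orientations of every bi-edge but keeps all other arcs (and loops).
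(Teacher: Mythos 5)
Your proof is correct. The paper itself states \Cref{lem:split} without providing a proof (it is not among the proofs collected in the appendix, presumably because the authors regard it as immediate), and your argument is exactly the natural one: bi-edges of $G$ are simultaneously the edges of $\Pi(G)$ and the length-2 cycles of $G$, so the forward direction follows by specializing the DFVS property to 2-cycles and to cycles surviving in the bi-edge-free subgraph, and the converse follows by the case split on whether a given cycle traverses an arc of a bi-edge. Your explicit fixing of the reading of $G - E(\Pi(G))$ (delete both orientations of every bi-edge) and the remark on loops are sensible precautions and consistent with the paper's conventions, under which self-loops are removed in preprocessing anyway.
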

Thus, we can treat a DFVSP instance as a combination
of a VCP instance $\Pi(G)$ and a smaller DFVSP instance without bi-edges.
Hence, given sufficient preconditions, it suggests itself to apply
VCP reductions to DFVSP.
We derive these preconditions using boundary reductions similar to those of~\cite{fellows2018what}, which intuitively locally replace one part of a graph by another.
\begin{Definition}[Boundary VCP Reduction]
A boundary VCP reduction is a tuple $r = \langle H, H', B, c\rangle$, where $H, H'$ are graphs, $B \subseteq V(H)\cap V(H')$ is a set of non-isolated vertices in both $H$ and $H'$, and $c \in \mathbb{Z}$, such that 
for all $X \subseteq B$ it holds that 

\smallskip\noindent
\centerline{$
VC(H - X) = VC(H' - X) + c.
$}

\smallskip\noindent
A reduction $r$ is \emph{applicable} to $G$, if the vertices in $B$ are (i) not isolated in $G$, (ii) an independent set in $G$, and (iii) the overlapping vertices, i.e., $V(G) \cap V(H) = B = V(G) \cap V(H')$.
\end{Definition}
\begin{example}
An example of a boundary reduction $r = \langle H, H', B, c\rangle$ is given by 
the boundary $B = \{b_1,b_2\}$, size difference $c = 1$, and the graphs
\begin{center}
\begin{tikzpicture}
		\node (h) at (-1, 0) {$H = $};
        \node (b1) at (0, 0) {$b_1$};
        \node (i1) at (1, 0) {$v_1$};
        \node (i2) at (2, 0) {$v_2$};
        \node (b2) at (3, 0) {$b_2$};
        \draw[-] (b1) -- (i1);
        \draw[-] (i1) -- (i2);
        \draw[-] (i2) -- (b2);
        \node (hp) at (4, 0) {$H' = $};
        \node (bp1) at (5, 0) {$b_1$};
        \node (bp2) at (6, 0) {$b_2$};
        \draw[-] (bp1) -- (bp2);
\end{tikzpicture}
\end{center}
$r$ is a boundary reduction, since for $X \subseteq \{b_1,b_2\}, X \neq \emptyset$, there is always still the edge between $v_1$ and $v_2$ in $H-X$, which means that $VC(H-X) = 1$, whereas $H'-X$ is edgeless. On the other hand, if $X = \emptyset$, then $VC(H-X) = VC(H) = 2$ and $VC(H'-X) = VC(H') = 1$.
\end{example}
Importantly, applicability guarantees soundness and the desired locality property:
\begin{theorem}
\label{thm:vc_boundary}
For every graph $G$ such that $\langle H, H', B, c\rangle$ is applicable
it holds that for every minimum VC $S$ of $G + V(H) + E(H)$ there is a minimum VC $S'$ of $G + V(H') + E(H')$ such that $|S| = |S'| + c$, and $S \cap V(G) = S' \cap V(G)$.
\end{theorem}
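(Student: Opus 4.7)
The plan is to decompose the vertex cover problem on $G_H := G + V(H) + E(H)$ (and symmetrically on $G_{H'} := G + V(H') + E(H')$) along the boundary $B$. By applicability, $V(G) \cap V(H) = B$, so $V(G_H)$ partitions disjointly as $(V(G)\setminus B) \sqcup B \sqcup (V(H)\setminus B)$; and since $B$ is independent in $G$, every edge of $E(G)$ has at most one endpoint in $B$, so $E(G_H) = E(G) \sqcup E(H)$ with the two sides only interacting through $B$. An analogous decomposition holds for $G_{H'}$.

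My first step is to parameterise vertex covers by their trace $X := S \cap B$ on the boundary. Define $\alpha(X)$ to be the minimum size of a set $Y \subseteq V(G)\setminus B$ such that $X \cup Y$ covers $E(G)$; this quantity depends only on $G$ and $X$, not on $H$ or $H'$. I then aim to establish the identity
\[
VC(G_H) \;=\; \min_{X \subseteq B}\bigl(\alpha(X) + |X| + VC(H-X)\bigr),
\]
and the analogous formula with $H$ replaced by $H'$. The direction ``$\leq$'' follows by assembling a witness $Y$ for $\alpha(X)$, the set $X$ itself, and an optimal vertex cover of $H-X$ taken inside $V(H)\setminus B$. The direction ``$\geq$'' follows by taking any minimum VC $S$ of $G_H$, setting $X = S \cap B$, and observing that $S \cap (V(G)\setminus B)$ witnesses $\alpha(X)$ while $S \cap (V(H)\setminus B)$ is a VC of $H-X$; if $S$ happens to place a vertex $b \in B$ there purely to cover an $E(H)$-edge, independence of $B$ in $G$ lets me absorb $b$ into $X$ without breaking $E(G)$-coverage.

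Combining the two identities with the hypothesis $VC(H-X) = VC(H'-X) + c$ applied termwise yields $VC(G_H) = VC(G_{H'}) + c$ at once. For the structural conclusion, given a minimum VC $S$ of $G_H$ with $X = S\cap B$ and $S_G = S \cap (V(G)\setminus B)$, I would pick a minimum VC $T$ of $H'-X$ sitting inside $V(H')\setminus B$ (using the absorption argument in reverse to enforce this) and set $S' := S_G \cup X \cup T$. Then $S'$ covers $E(G)$ by inheritance from $S$ and $E(H')$ because $X$ handles edges incident to $X$ and $T$ handles the rest; a direct count gives $|S'| = \alpha(X) + |X| + VC(H'-X) = VC(G_{H'})$, so $S'$ is minimum, and $S' \cap V(G) = S_G \cup X = S \cap V(G)$ by construction.

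The main obstacle will be the bookkeeping across $B$: a minimum VC is free to distribute its boundary vertices between the ``$G$-side'' and the ``$H$-side'' of the decomposition, and I need this distribution to be consistent so that the three terms $\alpha(X)$, $|X|$, and $VC(H-X)$ truly decouple. The lever making this work is condition (ii), independence of $B$ in $G$, which makes $E(G)$-coverage invariant under sliding a $B$-vertex between sides; condition (i), non-isolation of $B$, rules out degenerate cases in which the minimum cover trivially avoids $B$. Once this decoupling is cleanly argued, the size identity and the construction of $S'$ reduce to straightforward assembly.
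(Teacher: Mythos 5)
Your route is essentially the paper's own: both arguments fix a minimum VC $S$ of $G+V(H)+E(H)$, read off the boundary trace $X=S\cap B$, observe that $S\cap(V(H)\setminus B)$ is an optimal cover of $H-X$, and swap it for an optimal cover of $H'-X$. Your explicit identity $VC(G_H)=\min_{X\subseteq B}\bigl(\alpha(X)+|X|+VC(H-X)\bigr)$ is a cleaner packaging of that decomposition, and it establishes the cardinality claim $|S|=|S'|+c$ at least as carefully as the paper does (the paper merely asserts that $S\cap(V(H)\setminus B)$ is optimal for $H-X$, which your ``$\geq$'' direction actually proves).

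The genuine gap is in the locality claim $S\cap V(G)=S'\cap V(G)$. You need a minimum VC $T$ of $H'-X$ with $T\subseteq V(H')\setminus B$ and defer this to ``the absorption argument in reverse,'' but no such argument exists: a minimum cover of $H'-X$ can be \emph{forced} to use vertices of $B\setminus X$. This already happens in the paper's own running example, where $H'$ is the single edge $\{b_1,b_2\}$ with both endpoints in $B$, so for $X=\emptyset$ every VC of $H'-X$ meets $B$. Concretely, let $G$ be the path $b_1-u-b_2$ (so $B$ is independent and non-isolated and the reduction is applicable): $G+V(H)+E(H)$ is a $5$-cycle with minimum VC $S=\{u,v_1,v_2\}$ and $S\cap V(G)=\{u\}$, while $G+V(H')+E(H')$ is a triangle, every minimum VC of which contains $b_1$ or $b_2$; since $B\subseteq V(G)$, the displaced boundary vertex lands in $S'\cap V(G)$ and no $S'$ with $S'\cap V(G)=\{u\}$ exists. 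To be fair, the paper's proof makes exactly the same silent leap (it sets $S'=(S\cap V(G))\cup S_l'$ without noting that $S_l'$ may meet $B$), so you have not overlooked an idea the paper supplies. What your construction does deliver, and what should be stated, is the weaker locality $S\cap(V(G)\setminus B)=S'\cap(V(G)\setminus B)$: there $T\subseteq V(H')\setminus X$ is automatically disjoint from $V(G)\setminus B$, and the size and minimality computations go through exactly as you wrote them.
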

Not only does this theorem guarantee that the size of a minimum VC changes by $c$, it also tells us that the modification only has local effects on the minimum VCs. 
This locality is particularly interesting for lifting VCP reductions
to DFVSP, as it allows their application when a digraph ``locally behaves like a VCP instance''. This intuition is formalized in the following theorem.
\begin{theorem}
\label{thm:dfvs_boundary}
Let $G$ be a digraph, $r = \langle H, H', B, c \rangle$ be a boundary VCP reduction and $\Pi(G) = G' + V(H) + E(H)$
such that $r$ is applicable to $G'$.
If\\
(i) all edges incident in $G$ to any $v \in V(H) \setminus B$ are bi-edges and \\
(ii) for every arc $(u,w) \in E(G)$ at least one of the following holds: \\
(ii.a) $(u,w)$ is a bi-edge or \\
(ii.b) $|\{u,w\} \cap B| \leq 1$
then 

\smallskip
\centerline{
$
DFVS(G) = DFVS(G^*) + c,
$
}

\smallskip\noindent
where $G^*$ is given by

\smallskip
\centerline{
$
\begin{array}{rl}
G & - (V(H) \setminus B) - B \times B\\
  & + V(H') + \SB (u,v) \SM \{u,v\} \in E(H')\SE.
\end{array}
$
}
\end{theorem}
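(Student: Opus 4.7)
My plan is to leverage Lemma~\ref{lem:split}, which decomposes a DFVS of a digraph into a VC of its bi-edge graph $\Pi(G)$ together with a DFVS of its non-bi-edge part $G - E(\Pi(G))$. The idea is to apply Theorem~\ref{thm:vc_boundary} on the VC side of this decomposition while arguing that the DFVS side is essentially unchanged by the transformation from $G$ to $G^*$.

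First I would verify the structural identity $\Pi(G^*) = G' + V(H') + E(H')$. Starting from $\Pi(G) = G' + V(H) + E(H)$, condition (i) guarantees that every arc of $G$ incident to $V(H) \setminus B$ is a bi-edge, so removing these vertices from $G$ precisely kills the $E(H)$-edges touching $V(H) \setminus B$. For the bi-edges of $\Pi(G)$ sitting inside $B$, condition (ii) forces every arc of $G$ with both endpoints in $B$ to be a bi-edge, and the independence of $B$ in $G'$ ensures these bi-edges must come from $E(H)$ rather than $E(G')$. Hence subtracting $B \times B$ from $G$ deletes exactly those remaining $E(H)$-edges, and re-inserting the bi-edges of $E(H')$ yields the desired identity. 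In the same breath, conditions (i) and (ii) imply that every arc removed in passing from $G$ to $G^*$ is a bi-edge and every arc added is a bi-edge by construction, so the non-bi arcs of $G$ and $G^*$ coincide, and the vertices $V(H) \setminus B$ and $V(H') \setminus B$ are isolated in the respective non-bi parts. Consequently the cycles of $G - E(\Pi(G))$ and $G^* - E(\Pi(G^*))$ coincide and are supported on $V(G')$.

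With this structural dictionary in place, both inequalities follow symmetrically. For $DFVS(G) \geq DFVS(G^*) + c$, take a minimum DFVS $S$ of $G$; by Lemma~\ref{lem:split} it is a VC of $\Pi(G)$, so Theorem~\ref{thm:vc_boundary} yields a minimum VC $S'$ of $\Pi(G^*)$ with $|S'| = |S| - c$ and $S' \cap V(G') = S \cap V(G')$. Since the cycles of $G^* - E(\Pi(G^*))$ live in $V(G')$ and are hit by $S$, they are also hit by $S'$; applying Lemma~\ref{lem:split} in reverse produces a DFVS of $G^*$ of size $|S| - c$. The opposite inequality follows from the observation that $\langle H', H, B, -c \rangle$ is itself a boundary VCP reduction applicable to $G'$, so the same argument run on a minimum DFVS of $G^*$ produces a DFVS of $G$ of size $DFVS(G^*) + c$.

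The main obstacle I expect is the first step: the bookkeeping that establishes $\Pi(G^*) = G' + V(H') + E(H')$ while simultaneously certifying that no non-bi arc is lost or inadvertently introduced. The delicate issue is attributing the bi-edges of $\Pi(G)$ that live inside $B$ to $E(H)$ rather than $E(G')$, which hinges on both the independence of $B$ in $G'$ and on condition (ii); the removal of $B \times B$ in the definition of $G^*$ is tailored exactly to wipe these out. Once the accounting is done, the DFVS-to-VC transfer via Lemma~\ref{lem:split} and Theorem~\ref{thm:vc_boundary} proceeds cleanly.
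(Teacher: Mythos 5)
Your overall strategy is exactly the intended one (the paper's own proof is only the remark that it is ``analogous to Theorem~\ref{thm:vc_boundary}''), and your structural bookkeeping is correct and worth having spelled out: conditions (i) and (ii) together with the independence of $B$ in $G'$ do give $\Pi(G^*) = G' + V(H') + E(H')$, and they guarantee that only bi-edges are removed or added, so $G - E(\Pi(G))$ and $G^* - E(\Pi(G^*))$ have the same cycles, all supported on $V(G')$. The symmetry observation that $\langle H', H, B, -c\rangle$ is again an applicable boundary reduction also cleanly delivers the second inequality.

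There is, however, one step that fails as written: you invoke Theorem~\ref{thm:vc_boundary} on $S$ after noting only that $S$ is \emph{a} VC of $\Pi(G)$, but the theorem's hypothesis is that $S$ is a \emph{minimum} VC, and a minimum DFVS of $G$ need not be one (take $G$ with a single bi-edge plus a disjoint directed triangle: the minimum DFVS has size $2$ while $VC(\Pi(G))=1$). Indeed the conclusion you extract --- a minimum VC of $\Pi(G^*)$ of size $|S|-c$ --- would be false whenever $|S| > VC(\Pi(G))$. The repair is to reopen the proof of Theorem~\ref{thm:vc_boundary} rather than use it as a black box: with $X = S \cap B$, one must argue that $S \cap (V(H)\setminus B)$ is a minimum VC of $H - X$, and this now follows not from minimality of $S$ as a vertex cover but from minimality of $S$ as a DFVS together with condition (i) (the vertices of $V(H)\setminus B$ are isolated in $G - E(\Pi(G))$ and cover no edges of $\Pi(G)$ outside $E(H)$, so a smaller cover of $H-X$ could be substituted while preserving the DFVS property, contradicting minimality). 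Once that exchange argument is in place, $VC(H-X) = VC(H'-X)+c$ yields the required $S'$ with $|S'| = |S|-c$ and $S' \cap V(G') = S \cap V(G')$, and the rest of your argument goes through.
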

We can, thus, use many VCP reductions without modification by checking the preconditions of Theorem~\ref{thm:dfvs_boundary}.
For space reasons, we do not introduce VCP reductions that we strictly generalize to the DFVSP setting later on. \av{They can, however, be found in Appendix~\ref{app:vc-reductions}.}
This leaves only the following reduction, which we apply without
any changes:
\begin{reduction}[3EMPTY~\cite{stege1999improved,fellows2018what}]
If there exists a vertex $v \in V(G)$ such that $|N(v)| = 3$ and $|E(G[N(v)])| = 0$, then replace $G$ by 

\smallskip\noindent
\[
G - v + \{\{a,b\}, \{b,c\}\} + \{a\}\times N(b) + \{b\}\times N(c) + \{c\}\times N(a). 
\]
\end{reduction}
\subsection{Directed Versions of VCP Reductions}
Some VCP reductions can be generalized to DFVSP, even when Theorem~\ref{thm:dfvs_boundary} does not apply. Note that all of the following reductions are strict generalizations of VCP reductions. I.e., when a digraph only has bi-edges, then each of the new DFVS reductions corresponds to a VCP reduction.

A simple example of this is the SUBSET reduction:
\begin{reduction}[SUBSET]
If there exists $v,u \in V(G)$ such that $(v,u), (u,v) \in E(G)$, $N_{pre}(v) \subseteq N_{pre}(u) \cup \{u\}$, and $N_{succ}(v) \subseteq N_{succ}(u) \cup \{u\}$, then replace $G$ by $G-u$.
\end{reduction}
\begin{theorem}
Let $G$ be a digraph. After applying SUBSET to vertices $v,u$ resulting in $G'$, it holds that for every minimum DFVS $S$ of $G'$ the set $S\cup \{u\}$ is a minimum DFVS of $G$.
\end{theorem}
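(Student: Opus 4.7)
The proof splits naturally into feasibility of $S\cup\{u\}$ and optimality.

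For feasibility, I would argue directly: any cycle of $G$ either passes through $u$, in which case $u\in S\cup\{u\}$ covers it, or avoids $u$ and is therefore a cycle of $G' = G-u$, hence covered by some vertex of $S$. This already gives $DFVS(G) \leq |S|+1 = DFVS(G')+1$.

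For optimality, my plan is to establish the reverse inequality $DFVS(G') \leq DFVS(G)-1$. Take any minimum DFVS $T$ of $G$. Since the length-$2$ cycle formed by the bi-edge between $u$ and $v$ must be covered, $T$ contains $u$ or $v$. If $u\in T$, then $T\setminus\{u\}$ is immediately a DFVS of $G'$ of size $|T|-1$ and we are done. The remaining case is $v\in T$, $u\notin T$, where I want to swap $v$ out for $u$: set $T^\ast := (T\setminus\{v\})\cup\{u\}$ and argue that $T^\ast$ is still a (necessarily minimum) DFVS of $G$, after which we are back in the first case.

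To verify the swap, I take an arbitrary simple cycle $C$ of $G$; the only non-trivial subcase is $v\in V(C)$ with $u\notin V(C)$. Let $x$ and $y$ be the predecessor and successor of $v$ along $C$; both differ from $u$, so the SUBSET hypotheses $N_{pre}(v)\subseteq N_{pre}(u)\cup\{u\}$ and $N_{succ}(v)\subseteq N_{succ}(u)\cup\{u\}$ force $(x,u),(u,y)\in E(G)$. Replacing $v$ by $u$ in $C$ then yields another cycle $C'$ of $G$, still simple since $u\notin V(C)$. As $T$ covers $C'$ and $u\notin T$, the witnessing vertex $w$ lies in $V(C')\setminus\{u\} = V(C)\setminus\{v\}$, so $w\in (T\setminus\{v\})\cap V(C) \subseteq T^\ast\cap V(C)$, as required.

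The main obstacle is precisely this cycle-substitution step: one must verify that after swapping $v$ for $u$ the result is still a valid simple cycle in $G$, and then trace the vertex chosen by $T$ to cover $C'$ back to a vertex of $C$ that still belongs to $T^\ast$. Both points rely crucially on the SUBSET preconditions together with the assumption $u\notin V(C)$; everything else reduces to bookkeeping.
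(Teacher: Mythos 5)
Your proof is correct, and it is essentially the substitution argument this paper uses throughout for reductions of this kind (compare the ``replace $c_1$ by $c_2$'' and ``replace $a$ by $d$'' cycle-rerouting steps in the MANYFOLD and 4PATH proofs); the appendix does not actually print a proof for SUBSET, and your two-case split (either $u\in T$, or swap $v$ for $u$ using $N_{pre}(v)\subseteq N_{pre}(u)\cup\{u\}$ and $N_{succ}(v)\subseteq N_{succ}(u)\cup\{u\}$) is the intended one. The only caveat is that the substitution step tacitly assumes $v$ has no self-loop (a loop at $v$ cannot be rerouted through $u$), which is harmless here since the paper removes self-loops in preprocessing and applies LOOP before the other reductions.
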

Other reductions such as the MANYFOLD reduction, are more advanced.
\begin{reduction}[MANYFOLD]
If there exists a vertex $v \in V(G)$ such that $N(v) = N_{bi}(v)$ and 
there is a partition
$(C_1,C_2)$ of $N(v)$, where
\begin{itemize}[leftmargin=9pt,itemsep=0pt]
    \item $|C_1| \geq |C_2|$,
    \item $G[C_i]$ is a diclique for $i = 1,2$,
    \item $M$ is the set of non-arcs of $G[N(v)]$,
    \item $(c,d) \in M$ implies that either $(d,c) \in M$ and there is no uncovered path between $c$ and $d$, or $(d,c) \not\in M$ and every uncovered path from $d$ to $c$ uses the arc $(d,c)$
    \item for each $c_1 \in C_1$, there is exactly one $c_2 \in C_2$ (denoted $c_2(c_1)$) such that $(c_1,c_2) \in M$ or $(c_2,c_1) \in M$,
\end{itemize}
then, replace $G$ by 
\begin{align*}
& G - v - C_2 \\
+ & \textstyle\bigcup_{c_1 \in C_1} \{c_1\} \times N_{succ}(c_2(c_1)) \cup  N_{pre}(c_2(c_1)) \times \{c_1\} \\
- & \textstyle\bigcup_{c_1 \in C_1} (c_1,c_1).
\end{align*}
\end{reduction}
We go over the conditions to explain their relevance.
\begin{itemize}
    \item $N(v) = N_{bi}(v)$ needs to hold, to ensure that when a minimum DFVS $S$ does not contain $v$, then it must be the case that $N(v)$ is a subset of $S$.
    \item For each $c_1 \in C_1$ there is exactly one $c_2 \in C_2$ with missing arc $(c_1,c_2)$ or $(c_2,c_1)$. This ensures that when $c_1$ is not in a minimum DFVS $S$, then either all vertices from $C_2$ are in $S$ or only the uniquely determined vertex $c_2$ is not in $S$.
    \item The conditions on $M$ ensure that when we perform the contraction, we only add cycles for which there exists a corresponding cycle in the original digraph.
\end{itemize}
Note that the conditions on the arcs in $M$ are NP-hard to check. Later, we discuss alternative tractable and sufficient conditions. First we state soundness.
\begin{theorem}
Let $G$ be a loop-free digraph, such that PIE is not applicable and MANYFOLD is applicable to $v^{*} \in V(G)$ and $G'$ be the graph obtained from $G$ after MANYFOLD was applied on vertex $v^*$, then $DFVS(G) = DFVS(G') + |C_2|$ and given a minimum DFVS of $G'$, we can in polynomial time compute a minimum DFVS of $G$.
\end{theorem}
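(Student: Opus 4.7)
The plan is to prove both inequalities $DFVS(G) \ge DFVS(G') + |C_2|$ and $DFVS(G) \le DFVS(G') + |C_2|$ via explicit polynomial-time transformations between minimum DFVSs of $G$ and $G'$; this simultaneously yields the polynomial-time recovery claim. Three structural observations drive the case analysis. First, because $N(v^*) = N_{bi}(v^*)$, every $u \in N(v^*)$ lies on the $2$-cycle $v^* \to u \to v^*$, so any DFVS $S$ of $G$ satisfies $v^* \in S$ or $N(v^*) \subseteq S$. Second, since each $G[C_i]$ is a diclique, at most one vertex of each $C_i$ can lie outside $S$. Third, because $c_2(c_1)$ is the unique vertex of $C_2$ not fully bi-connected to $c_1$, whenever $c_2 \ne c_2(c_1)$ the $2$-cycle $c_1 \leftrightarrow c_2$ forces $\{c_1,c_2\} \cap S \ne \emptyset$.

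For the direction $DFVS(G) \ge DFVS(G') + |C_2|$, I take a minimum DFVS $S$ of $G$. If $v^* \notin S$ then $N(v^*) \subseteq S$, in particular $C_2 \subseteq S$, and I set $S' := S \setminus (\{v^*\} \cup C_2)$, of size $|S| - |C_2|$. If $v^* \in S$, the diclique bounds leave at most one $c_1^* \in C_1$ and at most one $c_2^* \in C_2$ outside $S$, and the third observation forces $c_2^* = c_2(c_1^*)$ whenever both exist; I then set $S' := S \setminus (\{v^*\} \cup (C_2 \setminus \{c_2^*\}))$. To verify $S'$ is a DFVS of $G'$, I observe that every cycle of $G'$ using only original arcs is a cycle of $G$ covered by $S \cap V(G') = S'$, while a cycle using a merged arc $(c_1,x)$ or $(x,c_1)$ expands via the bi-edges $c_1 \leftrightarrow v^* \leftrightarrow c_2(c_1)$ and the original arcs at $c_2(c_1)$ into a closed walk in $G$ containing a simple cycle that $S$ covers at a vertex in $V(G')$.

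For the direction $DFVS(G) \le DFVS(G') + |C_2|$, I start from a minimum DFVS $S'$ of $G'$. If some $c_1 \in C_1$ is absent from $S'$, I set $S := S' \cup \{v^*\} \cup (C_2 \setminus \{c_2(c_1)\})$, which is disjointly of size $|S'| + |C_2|$. Soundness follows by cases: cycles through $v^*$ or through $C_2 \setminus \{c_2(c_1)\}$ are directly covered; cycles through $c_2(c_1)$ that avoid $v^*$ and the remainder of $C_2$ translate, by rerouting through $c_1$ and its merged arcs, into one or two simple cycles of $G'$ through $c_1 \notin S'$, whose other vertices must then lie in $S' \subseteq S$; and cycles avoiding $\{v^*\} \cup C_2$ entirely live in $G - \{v^*\} - C_2 \subseteq G'$ and are covered by $S'$. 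If instead $C_1 \subseteq S'$, I take $S := S' \cup C_2$; then $N(v^*) \subseteq S$ covers every cycle through $v^*$, $C_2 \subseteq S$ covers cycles through $C_2$, and the remaining cycles lie in $G - \{v^*\} - C_2 \subseteq G'$.

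The main obstacle is establishing the cycle-to-cycle correspondence across the reduction once merged arcs are introduced. Naively, a cycle of $G'$ using a new arc only maps to a closed walk in $G$, and a cycle of $G$ through $c_2(c_1)$ yields a simple cycle in $G'$ only after splitting at any revisit of $c_1$. The conditions imposed on $M$, namely that for $(c,d) \in M$ either both directions are missing together with the absence of any uncovered path between $c$ and $d$, or the only uncovered path from $d$ to $c$ is the arc $(d,c)$, together with the non-applicability of PIE and loop-freeness are exactly what prevent degenerate configurations: they guarantee that every cycle in $G'$ arising from a merged arc has a witness cycle in $G$, and conversely that the walks produced above always decompose into simple cycles in the intended region. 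With these invariants in place, each case in the two constructions closes and the explicit formulas give the polynomial-time procedure for reconstructing a minimum DFVS of $G$ from one of $G'$.
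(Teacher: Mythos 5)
Your high-level structure matches the paper's proof almost exactly: the same three observations (bi-edges force $v^*\in S$ or $N(v^*)\subseteq S$; dicliques leave at most one vertex of each $C_i$ outside $S$; uniqueness of $c_2(c_1)$), the same reduction to the two cases $|S\cap C_i|=|C_i|$ versus $|C_i|-1$, and the same explicit constructions $S'=S\setminus C_2$, $S'=S\setminus(C_2\cup\{v^*\})$, $S=S'\cup C_2$, and $S=S'\cup(C_2\setminus\{c_2(c_1)\})\cup\{v^*\}$. The problem is that the entire content of the theorem lives in verifying that these sets really are DFVSs, and that verification is exactly what you defer. You name it yourself as ``the main obstacle'' and then assert that the conditions on $M$ ``are exactly what prevent degenerate configurations'' without showing how. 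The paper spends essentially all of its proof on precisely this step, via a case split on whether both of $(c_1,c_2),(c_2,c_1)$ lie in $M$ or only one does, and in each sub-case converts a hypothetical uncovered cycle into an uncovered path between $c_1$ and $c_2(c_1)$ in $G$, contradicting the hypothesis on $M$ (or straightness). Restating the hypotheses and claiming they suffice is not a proof of that step.

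Worse, the one concrete mechanism you do propose for the merged-arc case is the wrong one. You suggest expanding a $G'$-cycle through a merged arc at $c_1$ into a $G$-walk ``via the bi-edges $c_1\leftrightarrow v^*\leftrightarrow c_2(c_1)$.'' But in the case where merged arcs matter you have $v^*\in S$, so the resulting closed walk is trivially hit by $S$ at $v^*$, which yields no contradiction; and the simple cycle you extract from that walk may be nothing but the $2$-cycle $c_1 v^* c_1$ or $v^* c_2(c_1) v^*$, which $S$ covers only at $v^*\notin V(G')$. The correct move, as in the paper, is to substitute $c_2(c_1)$ for $c_1$ on the merged-arc portion (the merged arcs are by construction copies of $c_2(c_1)$'s arcs), which works cleanly when the cycle uses only merged arcs or only original arcs at $c_1$; the genuinely hard \emph{mixed} case, where the cycle enters $c_1$ along an original arc and leaves along a merged one (or vice versa), is the one that produces an uncovered $c_1$--$c_2(c_1)$ path in $G$ and is killed by the condition on $M$. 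Your sketch never isolates this mixed case, so the argument as written does not close.
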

We need to check two possible conditions on the arcs in $M$. First, if $(d,c) \in E(G)$ we use straightness:
\begin{Definition}[Straightness]
Let $G$ be a digraph and $(d,c) \in E(G)$. Then $(d,c)$ is straight, if $(c,d) \not \in E(G)$ and (i) every arc $(d,c') \in E(G)$ such that $c'\neq c$ is a bi-edge or (ii) every arc $(d',c) \in E(G)$ such that $d'\neq d$ is a bi-edge.
\end{Definition}
As desired, if the arc $(d,c)$ is straight, then every uncovered path that contains $d$ after $c$ uses it.
If, on the other hand, $(d,c)$ is not in $E(G)$, we need to prohibit the existence of an uncovered path between $c$ and $d$. 
Here, we give a sufficient condition using \emph{Strongly Connected Components (SCCs)}. Recall that an SCC is a subset maximal set $S$ of vertices such that for every combination $v,u$ of vertices in $S$ there is a (directed) path from $u$ to $v$.



We consider for a digraph $G$ the SCCs of $G - E(\Pi(G))$, i.e., $G$ without bi-edges, and denote for a vertex $v \in V(G)$ by $\SCC^{G}(v)$ the unique SCC containing it. If $G$ is clear from the context, we may omit the superscript. 
Then, the PIE reduction~\cite{lin2000computing} allows us to remove arcs between vertices $u,v$ such that $\SCC(v) \neq \SCC(u)$. This entails that when $\SCC(v) \neq \SCC(u)$, every path between $u$ and $v$ uses a bi-edge and is thus covered.

Together these conditions give us a tractable way of guaranteeing applicability of MANYFOLD regardless of whether both $(c,d)$ and $(d,c)$ are in $M$ or whether only one of them is. Figure~\ref{fig:2folds} shows example applications.
\begin{figure}
\begin{subfigure}[b]{\linewidth}
    \centering
    \begin{tikzpicture}
        \node (v) at (0, 0) {$v$};
        \node (a1) at (-1, 1) {$a_1$};
        \node (a2) at (-2, 1) {$a_2$};
        \node (a3) at (-2, 0) {$a_3$};
        \node (b1) at (1, 1) {$b_1$};
        \node (b2) at (2, 1) {$b_2$};
        \node (b3) at (2, 0) {$b_3$};
        \draw[<->] (v) -- (a1);
        \draw[<->] (v) -- (b1);
        \draw[->] (a1) -- (a2);
        \draw[->] (a2) -- (a3);
        \draw[->] (a3) -- (a1);
        \draw[->] (b1) -- (b2);
        \draw[->] (b2) -- (b3);
        \draw[->] (b3) -- (b1);
        \node (a1p) at (4, 1) {$a_1$};
        \node (a2p) at (3, 1) {$a_2$};
        \node (a3p) at (3, 0) {$a_3$};
        \node (b2p) at (5, 1) {$b_2$};
        \node (b3p) at (5, 0) {$b_3$};
        \draw[->] (a1p) -- (a2p);
        \draw[->] (a2p) -- (a3p);
        \draw[->] (a3p) -- (a1p);
        \draw[->] (a1p) -- (b2p);
        \draw[->] (b2p) -- (b3p);
        \draw[->] (b3p) -- (a1p);
    \end{tikzpicture}
    \caption{A MANYFOLD reduction, where $\SCC(a_1) = \{a_1,a_2,a_3\} \neq \{b_1,b_2,b_3\} = \SCC(b_1)$.}
    \label{fig:example:manyfold_I}
\end{subfigure}
\begin{subfigure}[b]{\linewidth}
\centering
\begin{tikzpicture}
        \node (v) at (0, 0.5) {$v$};
        \node (a) at (-1, 1) {$a$};
        \node (b) at (1, 1) {$b$};
        \node (c) at (0, 1.5) {$c$};
        \draw[<->] (v.west) to [out=180,in=270] (a.south);
        \draw[<->] (v.east) to [out=0,in=270] (b.south);
        \draw[->] (a) to (b);
        \draw[->] (b.north) to [out=120,in=0] (c.east);
        \draw[->] (c.west) to [out=180,in=60] (a.north);
        \node (ap) at (4, 0.5) {$a$};
        \node (cp) at (4, 1.5) {$c$};
        \draw[<->] (ap) -- (cp);
    \end{tikzpicture}
\caption{A MANYFOLD reduction, where the arc $(a,b)$ is straight.}
\label{fig:example:manyfold_II}
\end{subfigure}    
\caption{Two applications of the MANYFOLD reduction. In both cases, left is before, right is after.} 
    \label{fig:2folds}
\end{figure}
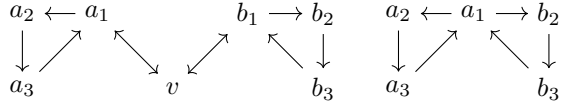
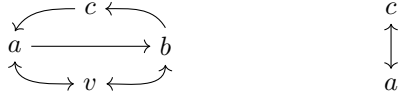

We note that it is not necessary to recompute SCCs at every step, since we can update them after each reduction in an approximate but safe manner and only recompute them periodically.

Apart from MANYFOLD, we can also generalize 4PATH in a similar manner, by exploiting the lack of uncovered paths between some of the involved vertices.
\begin{reduction}[4PATH]
If there exists a vertex $v \in V(G)$ such that
 \begin{itemize}
    \item $N(v) = N_{bi}(v) = \{a,b,c,d\}$,
    \item $E(G[N(v)]) = \SB(a,b),(b,a),(b,c),(c,b),(c,d),(d,c)\SE$,
    \item and there is no uncovered path between any pair of vertices from $\SB\{a,c\}, \{a,d\}, \{d,b\} \SE$,
\end{itemize}
then replace $G$ by 

\smallskip
\centerline{
$
\begin{array}{rl}
    G \;- &\!\!\!\! v + \{(a,c), (c,a), (a,d), (d,a), (b,d), (d,b)\} \\
    + &\!\!\!\!\{a,b\}\times N_{succ}(d) + N_{pred}(d)\times \{a,b\}\\
    + &\!\!\!\!\{c,d\}\times N_{succ}(a) + N_{pred}(a)\times \{c,d\}. 
\end{array}
$
}
\end{reduction}
Again, we practically ensure that every path is covered by requiring $\SCC(a) \neq \SCC(c), \SCC(a) \neq \SCC(d)$ and $\SCC(d) \neq \SCC(b)$.
For either condition 4PATH is sound.
\begin{theorem}
Let $G$ be a digraph such that 4PATH is applicable to $v^* \in V(G)$ and $G'$ be the graph obtained from $G$ after 4PATH was applied to the vertex $v^*$, then
 \begin{itemize}
    \item $DFVS(G) = DFVS(G')$, and
    \item given a minimum DFVS of $G'$, we can in polynomial time compute a minimum DFVS of $G$.
\end{itemize}
\end{theorem}
\begin{algorithm}[t]
\begin{algorithmic}[1]
\algnewcommand{\algorithmicgoto}{\textbf{go to}}%
\algnewcommand{\Goto}[1]{\State \algorithmicgoto~#1}%

\Function{CheckUnconfined}{$v, G$}
\State $A \leftarrow \{v\}$
\State $N \leftarrow \SB u \in V(G) \SM G[A\cup \{u\}] \text{ is cyclic}\SE$
\State $P \leftarrow  \{u \in N |\; |N_{succ}(u) \cap A| + |N_{pred}(u) \cap A| = 2\}$
\If{$P \neq \emptyset$}
    \State $u \leftarrow  \text{argmin}_{u' \in P} |N(u') \setminus (N \cup A)|$
    \If{$|N(u) \setminus (N \cup A)| = 0$}
        \Return True
    \ElsIf{$|N(u) \setminus (N \cup A)| = 1$}
        \State $A \leftarrow A \cup \{u\}$
        \Goto 3
    \EndIf
\EndIf
\Return False
\EndFunction
\end{algorithmic}
\caption{Checks if $v$ is unconfined in a digraph $G$.}
\label{alg:dunconfined}
\end{algorithm}
Whereas the above reductions all capture fixed graph patterns, the applicability of the following one is determined by the iterative procedure in~\Cref{alg:dunconfined}.
\begin{reduction}[UNCONFINED]
If there is a vertex $v \in V(G)$ such that \textsc{CheckUnconfined}($v,G$) returns True, replace $G$ by $G - v$.
\end{reduction}
When a vertex $v$ is unconfined, this guarantees us that while there may be minimum DFVSs that do not contain $v$, there is at least one, which does.
\begin{theorem}
Let $G$ be a digraph. After applying UNCONFINED to vertex $v$ resulting in $G'$, it holds that for every minimum DFVS $S$ of $G'$ the set $S\cup \{v\}$ is a minimum DFVS of $G$.
\end{theorem}
This concludes the data reductions that we use and brings us to the solving step.

\section{MaxSAT Solver}\label{sec:maxsat}
We compute the minimum DFVS for the reduced instance using a MaxSAT solver.
Recall from \Cref{sec:prelims} that we add one disjunction per cycle
containing exactly the variables corresponding to the vertices in the cycle.
Since there is a 1:1 correspondence between vertices and variables, cycles and clauses, 
as well as model, and DFVS, we treat them synonymously in this section.
We also refer to a DFVS candidate that does not break all cycles as \emph{infeasible}
and to a DFVS as a \emph{feasible} solution.

Enumerating all cycles for a complete encoding is generally impossible in practice.
A well-established technique that deals with this issue is CEGAR~\cite{clarke2003}.
In the context of DFVSP, a CEGAR approach initially gives the solver a small,
usually not comprehensive, set of uncovered cycles.
Whenever the solver returns a solution that is not a valid DFVS,
we add cycles that are are not broken by the infeasible solution.
This is repeated until the solver returns a feasible solution.
Very often, a comparatively small number of constraints, in our case cycles,
is sufficient for finding a feasible solution.

The main drawback of this CEGAR approach is the computational overhead.
While a solver's decision may quickly imply that the solution is infeasible,
the solver may run long past that point until it returns a solution.
Further, after an infeasible solution is returned, it is hard to determine
which of the solver's decisions caused the infeasability.
Lacking this knowledge, we have to add many cycles that are not necessary for guiding
the solver towards a feasible solution.

We propose cycle propagation for improved performance.
Cycle propagation adds the feasibility check
directly into the MaxSAT solver's logic and adds the necessary cycles
at exactly the point where the MaxSAT solver's decision would
cause the solution to become infeasible.

We focus on \emph{core-guided} MaxSAT solvers.
Here, the MaxSAT solver implements the search for an optimal solution
and calls the SAT solver repeatedly.
For each SAT call, the MaxSAT solver extends the input CNF by extra
clauses related to the search for an optimal solution~\cite{Morgado14b}.
We therefore add cycle propagation to the SAT solver, as 
the decisions that lead to infeasability are made here.
In order to introduce cycle propagation, we first discuss the basics of CDCL, 
the algorithm used by most modern SAT solvers~\cite{Moskewicz01,Marques-SilvaS99}.

\subsection{Conflict Driven Clause Learning (CDCL)}
\begin{algorithm}[t]
\begin{algorithmic}[1]
\Function{DFVS\_CDCL}{$G, \mathcal{C}$}
\State $D \gets \emptyset$
\While{$|\mathcal{C}| > 0$}
	    \State $\mathcal{C} \gets \mathcal{C} \setminus \SB C \in \mathcal C \SM D \cap C \neq \emptyset \SE$
	    \State $D' \gets \SB \lnot \ell \SM \ell \in D \SE$\label{alg:cdcl:removesat}
	    \If{$\Card{\SB C \in \mathcal{C} \SM \Card{C \setminus D'} = 0 \SE} > 0$}\label{alg:cdcl:conflict}
	        \If{No Decisions} \label{alg:cdcl:unsat}
                    \Return \false
                \EndIf
                \State $\mathcal{C} \gets \mathcal{C} \cup \Call{analyzeConflict}{ }$
                \State $\mathcal{C}, D \gets \Call{backtrack}{ }$
	    \ElsIf{$\Card{\SB C \in \mathcal{C} \SM \Card{C \setminus D'} = 1 \SE} > 0$}\label{alg:cdcl:bcp}
	        \State $D \gets \Call{booleanClausePropagation}{ }$
	    \Else
	        \Statex\tikzmk{A}\hspace{3.6em}\(\triangleright\) Cycle Propagation
    	    \State $V' = \SB v \in V(G) \SM \lnot v \in D \SE$\label{alg:cdcl:dfvs}
    	    \If{$G[V']$ contains a cycle $C$}
    	        \State $\mathcal{C} \gets \mathcal{C} \cup C$
    	    \Else\tikzmk{B}\boxit{MyYellow}  \label{alg:cdc:dfvs-end}
                \State $D \gets D \cup \{\Call{decideLiteral}{ }\}$\label{alg:cdcl:decide}
    	    \EndIf
	    \EndIf
\EndWhile
\State \Return \true
\EndFunction
\end{algorithmic}
\caption{The modified CDCL algorithm.}
\label{alg:cdcl}
\end{algorithm}

We limit ourselves to a cursory discussion of CDCL that
introduces the necessary concepts to understand cycle propagation.
Remember from \Cref{sec:prelims} that a SAT instance consists of variables $V$ and clauses $\mathcal{C}$.
The whole algorithm including cycle propagation is shown in \Cref{alg:cdcl}.
Ignoring cycle propagation in the block starting at Line~\ref{alg:cdcl:dfvs},
the listing shows the basic CDCL algorithm.

CDCL incrementally extends a partial assignment $D$, assigning values to some of the variables,
to a full assignment until it either obtains a model or knows that the formula is unsatisfiable.
The algorithm only keeps unsatisfied clauses and removes satisfied ones (Line~\ref{alg:cdcl:removesat}).

Conflicts occur when a clause $C$ cannot be satisfied by any extension of $D$ to a full assignment, because
$D$ contains the negation of $C$'s literals, as is checked in Line~\ref{alg:cdcl:conflict}.
Here, two things can happen.
If the conflict occurred without any prior decision, the set of clauses
implies a conflict and the formula is unsatisfiable.
Otherwise, CDCL learns a conflict clause: a clause based on the decisions
that lead to the conflict and that prevents the solver from making the same set of
decisions again.
Afterwards, the solver backtracks, where it removes the corresponding literals from $D$ and restores the corresponding removed clauses and literals to $\mathcal{C}$.

\emph{Boolean constraint propagation} and \emph{decisions} are used by CDCL to extend $D$.
Boolean constraint propagation adds implied literals to $D$, where a literal
is implied if there exists a clause where this literal is the only one
remaining that can be satisfied by an extension of $D$, as is checked in Line~\ref{alg:cdcl:bcp}.
Decisions add a selected literal to $D$ after exhaustively applying Boolean constraint propagation
without a conflict as denoted in Line~\ref{alg:cdcl:decide}.

This description of CDCL is deliberately conceptual. 
Modern SAT and MaxSAT solvers are well engineered pieces of software
that use sophisticated data structures and algorithms which are integral
to their performance.
Particularly conflict analysis, backtracking, and decisions have not been covered here.
We refer the interested reader to~\cite{sathandbook2021} for more details.

With the knowledge of how CDCL works, we discuss the integration of cycle propagation 
next.

\subsection{Cycle Propagation}
Conflicts are a central concept in CDCL, as they signal the solver that
a partial assignment is infeasible.
Cycle propagation uses this mechanism to ensure that the solver stops 
as soon as $D$ implies a cycle.
We perform this check after Boolean constraint propagation in Line~\ref{alg:cdcl:dfvs}.

Cycles are only implied by negative literals, since negative literals indicate that a vertex
remains in the graph.
Hence, it is sufficient to check if the negative literals
$V' = \SB v \in V \SM \lnot v \in D \SE$ induce an acyclic graph, i.e., if $G[V']$ is acyclic.
In case a cycle $C$ is found, it is added as a clause to $\mathcal{C}$.

Adding $C$ immediately causes a conflict, since by definition~$D$ contains the negation of~$C$.
Hence, we achieve our goal of immediately stopping the solver.
Further, we add a single cycle and corresponding conflict clause, thereby minimizing
the number of extra constraints.
This usually guides the solver quicker to a feasible solution 
than adding several cycles after the solver returns an infeasible solution.
Note that the solver with cycle propagation
never returns an infeasible solution.

The acyclicity check is performed using a DAG implemented as a simple doubly linked data structure.
Here, each vertex knows its predecessors, successors and has an order.
The structure preserves two invariants: it is a DAG, and the order of a vertex is 
the maximum order over its predecessors plus one, or
$0$ if the vertex has no predecessors.
Whenever a new vertex is inserted, its order is recursively propagated to the successors.
Recursive calls are only necessary, if the propagated order plus one is larger than the successor's order.
Should the propagation reach the inserted vertex, we have found a cycle
and we remove the vertex, preserving the invariant that the structure is a DAG.
Removal of a vertex requires recursively propagating the change to all successors
whose ordering depends on the target vertex.

Cycle propagation is performed after Boolean constraint propagation for practical reasons.
First, modern SAT solvers spend most of their time performing Boolean constraint propagation
and can perform this task very fast. 
Checking for cycles after each change to $D$ would, therefore, cause a considerable slowdown
of the solver.
Second, we keep track of the changes to the partial assignment in between cycle propagation runs.
This allows us to perform the aforementioned modifications to our data structure in bulk, further speeding up the acyclicity check.
With these considerations, the runtime percentage dedicated to cycle propagation shown in the profiler
is in the low single digits
as Boolean constraint propagation still takes up almost all of the
runtime.

This concludes the conceptual description of our approach.
Next, we discuss our empirical evaluation.

    \section[Experiments]{Experiments\footnote{Results and source code are available at \url{https://doi.org/10.5281/zenodo.7307445}.}}\label{sec:experiments}
\paragraph{Instances}
We use instances from the recent \emph{PACE}, the argumentation framework competition \emph{ICCMA}, and \emph{random} graphs.
The recent PACE provides 200 dedicated DFVSP instances.\footnote{\url{https://pacechallenge.org/2022/tracks/}}\footnote{At the time of writing, details on the origin of the instances have not been released.}
The 137 ICCMA instances come from a recent argumentation framework competition\footnote{\url{https://argumentationcompetition.org/2021/}},
where we selected those instances with 50 to 1000 vertices.
We also generated 1140 random instances using different parameterizations for 
the number of vertices and the probability an edge exists using the methodology of~\cite{zhou2016spin}.
We generated 10 instances for each parameterization, which
are reported as 114 instances, averaging the results over the respective 10 instances.
The number of vertices ranges between 100 and 10000 and the average degree of a vertex
varies from 2 to 50.

We preprocessed the instances by removing all self-loops, as the PACE instances met this requirement and the competition solvers were not able to deal with instances containing self-loops.

\paragraph{Implementation}
We implemented the proposed algorithm in our solver \DAGger\footnote{\url{https://github.com/ASchidler/dfvs}}.
Our implementation is based on the MaxSAT solver 
EvalMaxSAT~\cite{avellaneda2020}, which uses Glucose 3 in the backend~\cite{audemard2009}.
We chose EvalMaxSAT because it placed well in the 2021 MaxSAT evaluation\footnote{\url{https://maxsat-evaluations.github.io/2021/}} and the code base
has no dependencies and can easily be modified and integrated.

We initially give up to $25000$ short cycles with a maximum length of $4$ 
to the MaxSAT solver.
These limits have performed best overall.
A lower maximum length does not find any cycles for some instances, while
a higher maximum length seems to slow down the solver, as does a larger number of cycles.

\paragraph{Setup}
Our implementation uses C++ and was compiled using gcc~7.5.0.
We compared our solver to the second place PACE solver \emph{grapa-java}\footnote{\url{https://gitlab.informatik.uni-bremen.de/grapa/java}}, which uses a CEGAR-like approach together with an integer linear programming solver and new data reductions.%
\footnote{We tried to obtain further solvers for comparison.
Unfortunately, for~\cite{bao2018analysis}, we did not manage to get in contact with the authors, for~\cite{fages2006constraint} the source code is lost and the implementation of~\cite{koehler2005contraction}
did not contain an exact solver}
As an additional baseline we also used a direct DFVSP encoding into \emph{SAT},
based on the transitive closure encoding for acyclicity~\cite{janota2017}, 
using our data reductions for preprocessing.

We used a time limit of 30 minutes and a memory limit of 8~GB.
The experiments were run on servers with two AMD EPYC 7402 CPUs, each with 24 cores running at 2.8 GHz, and using Ubuntu~18.04.

An instance counts as \emph{solved}, if it was solved in all five runs,
otherwise if it was solved in at least one run, it is counted as \emph{partially solved}.
The given values are averaged over all runs.

\subsection{Solvers}
Comparing \DAGger's performance to that of other solvers, was the goal of
our first experiment.
The results, together with different configurations from the next experiment,
are shown in Table~\ref{tab:results-overall} and 
as a cactus plot in Figure~\ref{fig:results-cactus}.

\begin{table}[htb!]
\centering
	\begin{tabular}{l rr rr rr } 
	\toprule
& \multicolumn{2}{c}{PACE} & \multicolumn{2}{c}{ICCMA} & \multicolumn{2}{c}{Random} \\
	Solver & S & P & S & P & S & P \\ 
	\cmidrule(r){1-1}
\cmidrule(lr){2-3}
\cmidrule(lr){4-5}
\cmidrule(lr){6-7}
	\DAGger &186&2	&64&0	&13&5 \\
	grapa-java&165&0	&52&0	&10&5 \\
	SAT&134&3	&59&0	&12&5	 \\
\cmidrule(r){1-1}
\cmidrule(lr){2-3}
\cmidrule(lr){4-5}
\cmidrule(lr){6-7}
Configuration& S & P & S & P & S & P  \\
\cmidrule(r){1-1}
\cmidrule(lr){2-3}
\cmidrule(lr){4-5}
\cmidrule(lr){6-7}
No CP &180&2	&62&0	&11&6	 \\
No DR&151&6	&63&1	&13&5	 \\
No CP \& DR &146&3	&62&0	&10&7	 \\
\bottomrule
	\end{tabular}
	\caption{Number of solved instances for different solvers and \DAGger configurations.
	\emph{S} and \emph{P} show the number of solved and partially solved instances respectively. 
	}
	\label{tab:results-overall}
\end{table}

\begin{figure}[htb!]
\centering
	\includegraphics[trim={4mm 2mm 0 0},clip, scale=0.59]{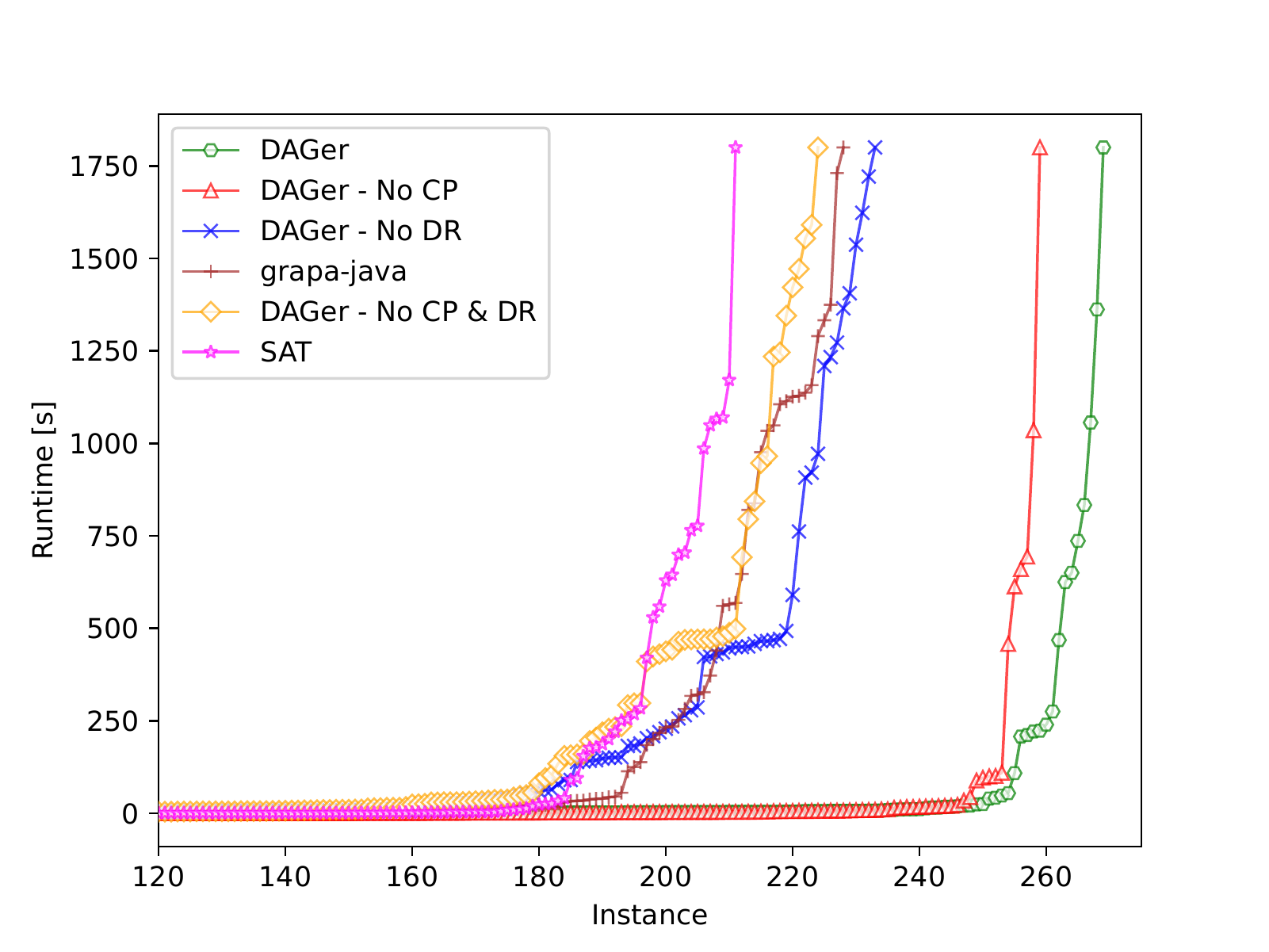}
	\caption{Cactus plot for different solvers and \DAGger configurations.}
	\label{fig:results-cactus}
\end{figure}

\DAGger performs better than both grapa-java and the SAT encoding for every
instance group.
Interestingly, the SAT encoding performs better than grapa-java on non-PACE
instances. 
The cactus plot shows that instances are either very hard, or very easy,
with very few solved instances having a high runtime.

\DAGger excels on the PACE instances and solved almost half the ICCMA instances,
but random instances seem to be hard for all solvers.
We will examine this behavior for \DAGger further in subsequent experiments.

\subsection{Features}
We measured the impact of our contributions by
disabling cycle propagation (CP) or our new data reductions (DR).
\Cref{tab:results-overall} shows the performance of these three additional configurations
and \Cref{fig:results-cactus} shows the runtime behavior.
Without cycle propagation, \DAGger calls
the MaxSAT solver incrementally and whenever the solution $D$ is infeasible,
we add disjoint cycles from $G - D$.
Hence, our experiment tests precisely the benefit 
the integration into the solver.

Cycle propagation has a small impact in terms of the number of instances.
While the number is small, the respective instances are hard and contain a very large 
number of uncovered cycles that we were unable to enumerate within the runtime.
Without cycle propagation, \DAGger solved 253 instances.
The initial solution was infeasible for 132 of those instances, lazily generated
clauses were not necessary for the remaining instances.

The impact of the data reduction depends strongly on the instance set.
PACE instances are heavily reduced, but the impact on ICCMA and random instances
is almost none.
The reason for this is that our new reductions rely heavily on structural properties
that are unlikely in randomized graphs.
For ICCMA instances, the reason is different, which we will explore next.

Overall, without our contributions, \DAGger would perform worse than grapa-java, but
better than the SAT baseline, showing that the incremental approach is the more promising
SAT approach for DFVSP.

\subsection{Data Reductions}
The effectiveness of the data reductions is not well represented by
the number of instances the solving algorithm can solve,
as in the future they might be beneficial for instances that are too hard for current solvers.
Figure~\ref{fig:dr-absolute} shows how much all instances were reduced in size and
offers some interesting insights.
First, many instances, particularly ICCMA instances, are directly solved by
our data reductions. 
The ICCMA instances seem to be either easily reducible, in which case they are also easy
to solve, or they are hard to reduce and solve.
Second, the result on random instances shows that the reductions become less effective with 
increasing density.
Lastly, on most instances, the data reductions can significantly decrease the instance's size.
\begin{figure}[htb!]
    \centering
    \includegraphics[trim={5mm 0 0 0},clip, scale=0.59]{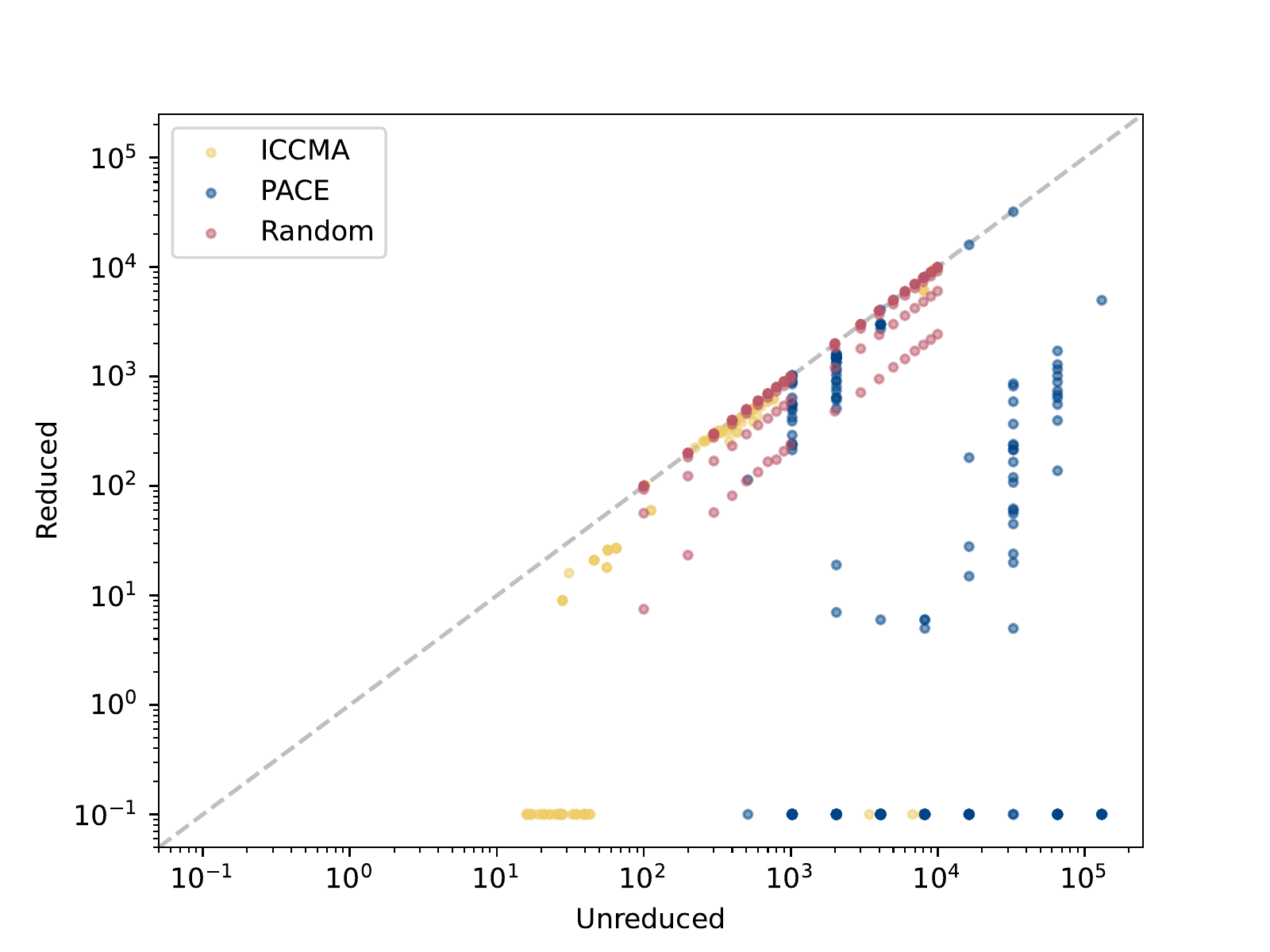}
    \caption{Comparison of graph sizes before and after applying data reductions. Due to the use of the logarithmic scale, we treat $0$ as $0.1$.}
    \label{fig:dr-absolute}
\end{figure}

We also wanted to see how much benefit our computationally more expensive reductions
have over the simple reductions proposed by~\citet{levy1988contraction}.
Figure~\ref{fig:dr-cmp} shows that our data reductions do not have much benefit over
the simple reductions on random instances.
For PACE, instances the reductions are very useful, reducing the size of almost all instances,
directly solving many of them.
For the ICCMA instances, the reductions either solve the instance or are ineffective.

\begin{figure}[htb!]
    \centering
    \includegraphics[trim={5mm 0 0 0},clip, scale=0.59]{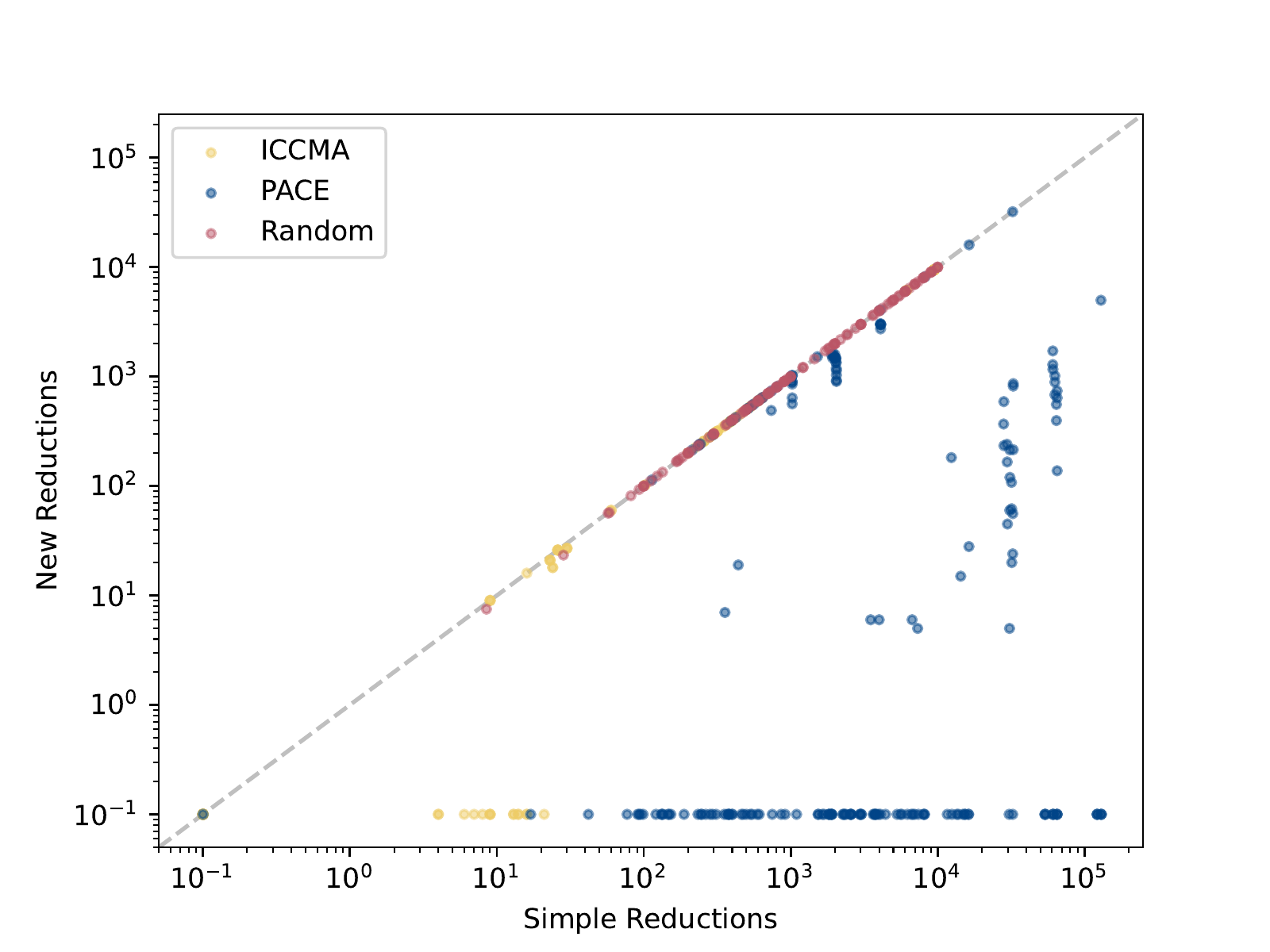}
    \caption{Comparison between using only simple reductions and using
    the data reductions we propose.
    Due to the use of the logarithmic scale, we treat $0$ as $0.1$.}
    \label{fig:dr-cmp}
\end{figure}

\subsection{Instance Size}
The potential correlation between the graph's size and the MaxSAT solver's
ability to find a minimum DFVS, was the focus of our last experiment.
Figure~\ref{fig:instances} shows which instances have been solved, partially solved, or
remained unsolved, in relation to the number of vertices and density.
While the number of uncovered cycles would also have been of interest,
we could not enumerate them in a reasonable amount of time for hard instances.
Since we were interested in the MaxSAT solver's performance,
the figure uses the data from the reduced instances.

\begin{figure}[htb!]
    \centering
    \includegraphics[trim={5mm 0 0 0},clip, scale=0.59]{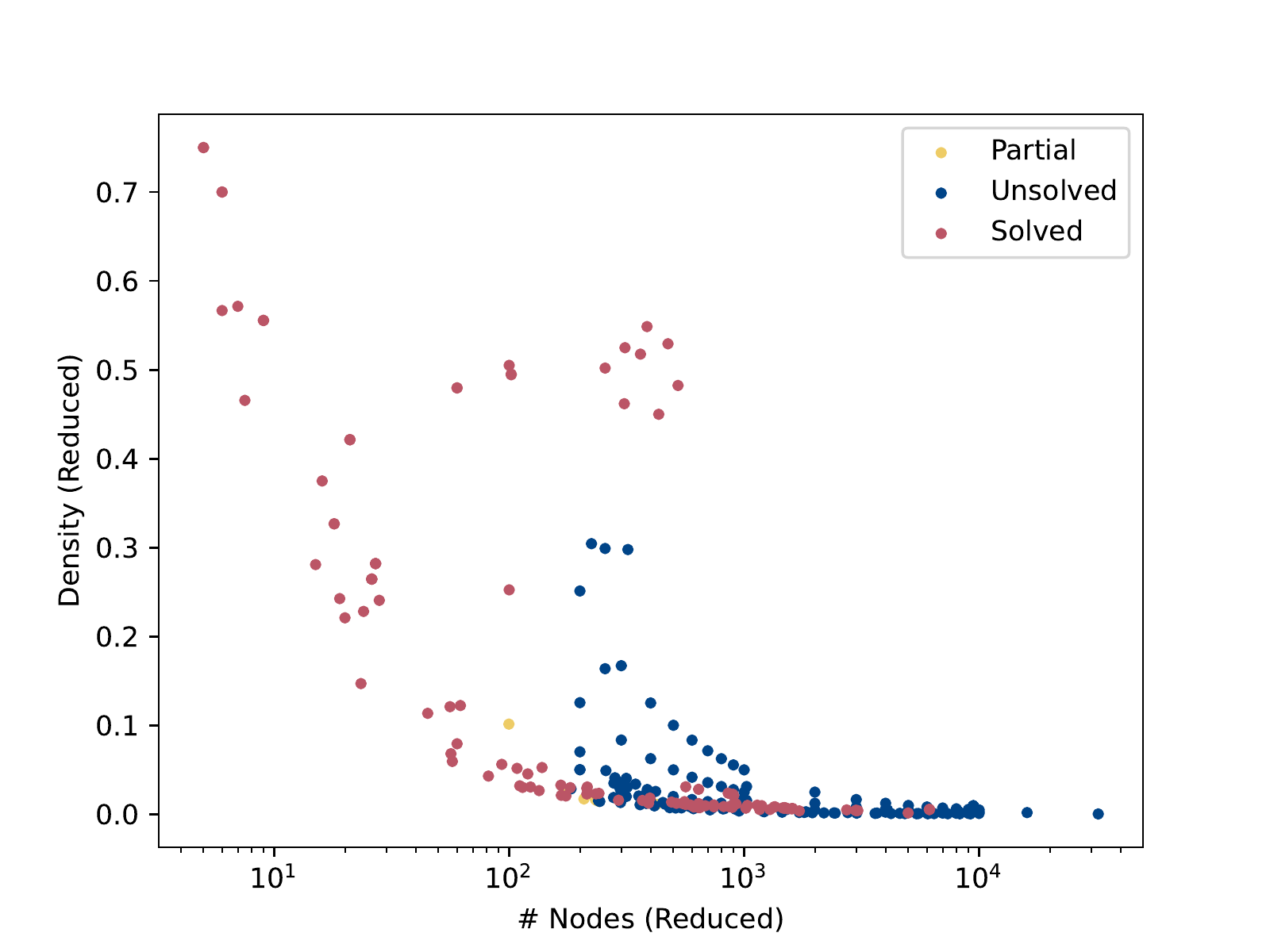}
    \caption{Solved instances in relation to the graph size and the density.}
    \label{fig:instances}
\end{figure}

The figure shows that increased size and density indeed make the instance harder.
Whereas for small graphs up to around 100 vertices the density does not matter much,
this changes for larger graphs.
The size limit for our approach seems to be around 10000 vertices, where the
solver fails even for very sparse graphs.

Interestingly, at around 1000 vertices there is a cluster of instances with high density that the
solver solved successfully. 
It seems that instances where almost the whole graph is part of a minimum
DFVS are again easier to solve than mid-density instances.

Random instances provide some more insight.
For 100 vertices, \DAGger solved almost all instances.
The only exception is when the minimum DFVS size is around 50, here \DAGger only managed to solve half the instances, hence, they seem to be harder.
For the remaining instances, \DAGger was not able to solve instances with average degree higher
than 3, but managed to solve random instances with up to 1000 vertices.


\subsection{Discussion}
The results show that the data reductions perform particularly well on the PACE instances,
where many instances have a high enough number of bi-edges.
While still useful on the ICCMA instances, as there they solve many instances directly, they are
not necessary, as the solver would have solved almost all of them.

Cycle propagation works in a complementary fashion to our data reductions.
It works particularly well on instances that have few uncovered short cycles, 
but a large number of uncovered cycles.
While not many of the instances in our instance sets fell into this category, cycle propagation
helped to solve several hard instances. 

In general, a CEGAR approach works well for DFVSP as
even without
cycle propagation and our data reductions,
the solver outperformed the second best PACE solver on non-PACE instances.

Particularly challenging for all tested solvers are instances of high edge density,
although there is a visible trend that indicates that instances with very high density
could in turn become easier.

\section{Conclusion}
In this paper, we discussed our novel approach to DFVSP.
Key features are new data reductions lifted from related problems. 
Apart from the reductions themselves, we also provided a theoretical basis
that can be used to lift further reductions in the future.
The other key feature is cycle propagation. 
While lazily extending the set of constraints to obtain a feasible solution with a limited set of constraints works well,
we managed to solve several hard instances by integrating this extension directly into the MaxSAT solver.



There are more data reductions from VCP that we did not consider for DFVSP, since they are based on very non-local conditions or seem highly difficult to adapt to DFVSP such as the CROWN~\cite{abu2004direct} or LP-based reduction~\cite{nemhauser1975vertex}. We nevertheless hope to incorporate more reduction techniques. For local VCP reductions Theorem \ref{thm:dfvs_boundary} is a strong methodological foundation for such a transfer, however, we hope to expand this and possible incorporate ideas from other related problems.

We have shown that cycle propagation works well in practice.
We think that there are two avenues where we might further improve its performance:
(i) there are several SAT solver details that might be used to further improve performance,
particularly adapting inprocessing and decision heuristics to incorporate domain specific
knowledge about DFVSP, and
(ii) we used a core-guided MaxSAT solver for our implementation;
it would be interested to see how well cycle propagation performs integrated into an
implicit hitting set based MaxSAT solver.

\av{
\appendix
\section{Standard DFVSP Reductions}
\label{app:reductions}

Here, we use $G\circ v$ as the digraph, called the exclusion of $v$ from a digraph $G$ by letting $G \circ v := G - v + N_{succ}(v)\times N_{pre}(v)$.
The most well-known reduction rules for DFVSP preprocessing are those of~\citet{levy1988contraction}:
\begin{reduction}[LOOP]
If there exists $v \in V(G)$ such that $(v,v) \in E(G)$ replace $G$ by $G - v$.
\end{reduction}
\begin{reduction}[IN0/1]
If there exists $v \in V(G)$ such that $v$ has at most one incoming edge, replace $G$ by $G \circ v$.
\end{reduction}
\begin{reduction}[OUT0/1]
If there exists $v \in V(G)$ such that $v$ has at most one outgoing edge, replace $G$ by $G \circ v$.
\end{reduction}

The latter two rules were later subsumed by~\citet{lemaic2008markov}, using the reductions.
\begin{reduction}[INDICLIQUE] 
If there exists $v \in V(G)$ such that the incoming edges of $v$ form a diclique, replace $G$ by $G \circ v$.
\end{reduction}
\begin{reduction}[OUTDICLIQUE] 
If there exists $v \in V(G)$ such that the outgoing edges of $v$ form a diclique, replace $G$ by $G \circ v$.
\end{reduction}
Apart from this, \citeauthor{lemaic2008markov} introduced two new reductions
\begin{reduction}[DICLIQUE-2] 
If there exists $v \in V(G)$ whose neighbors can be partitioned into two disjoint cliques $N_1, N_2$ such that the bi-edges of $v$ are a strict subset of $N_1$, replace $G$ by $G \circ v$.
\end{reduction}
\begin{reduction}
[DICLIQUE-3] If there exists $v \in V(G)$ without bi-edges whose neighbors can be partitioned into three disjoint cliques $N_1, N_2, N3$, replace $G$ by $G \circ v$.
\end{reduction}
Note, that for soundness of all the above reductions it is necessary that LOOP is not applicable to $v$.

Furthermore, \citeauthor{lin2000computing} introduced three further reductions that make use of bi-edges in the digraph. 
\begin{reduction}[PIE] 
If there is an arc $(u,v) \in E(G)$ such that $(v,u) \not\in E(G)$ and every path from $v$ to $u$ in $G$ uses a bi-edge, replace $G$ by $G - (u,v)$.
\end{reduction}
\begin{reduction}[DOME] 
If there is an arc $(v,u) \in E(G)$ such that $(u,v) \not\in E(G)$ and one of the following holds
 \begin{itemize}
    \item $\SB p \SM (p, v) \in E(G), (v,p) \not \in E(G)\SE \subseteq \SB p \SM (p,u) \in E(G) \SE$, i.e., for every $(p,v) \in E(G)$ that is not a bi-edge there is an arc $(p,u) \in E(G)$. 
    \item $\SB p \SM (u, p) \in E(G), (p,u) \not \in E(G)\SE \subseteq \SB p \SM (v,p) \in E(G)\SE$, i.e., for every $(u,p) \in E(G)$ that is not a bi-edge there is an arc $(v,p) \in E(G)$. 
\end{itemize}
then replace $G$ by $G - (v,u)$.
\end{reduction}
\begin{reduction}[CORE] 
If there exists $v \in V(G)$ such that all arcs of $v$ are bi-edges and the neighbor of $v$ form a diclique, replace $G$ by $G \circ v$.
\end{reduction}
Note that also the CORE reduction is a special case of the INDICLIQUE and OUTDICLIQUE reductions.
\section{Standard VCP Reductions}
\label{app:vc-reductions}
\begin{reduction}[SUBSET~\cite{stege1999improved}]
If there exists $v,u \in V(G)$ such that $\{v,u\} \in E(G)$ and $N(v) \subseteq N(u) \cup \{u\}$, then replace $G$ by $G-u$.
\end{reduction}


Another reduction by \citeauthor{fellows2018what} is more complicated but generalizes many others like the 2FOLD reduction~\cite{xiao2013confining}.
\begin{reduction}[MANYFOLD {{\cite{fellows2018what}}}]
If there exists a vertex $v \in V(G)$ such that there is a partition
$(C_1,C_2)$ of $N(v)$, where
 \begin{itemize}
    \item $|C_1| \geq |C_2|$,
    \item $C_i$ is a clique for $i = 1,2$, and
    \item for each $c_1 \in C_1$, there is precisely one $c_2 \in C_2$ such that $\{c_1,c_2\} \not\in E(G)$.
\end{itemize}
Then, replace $G$ by 
\[
G - v - C_2 + \bigcup_{\{c_1, c_2\} \in M, c_1 \in C_1} \{c_1\} \times N(c_2),
\]
where $M$ denotes the set of missing edges from $G[N(v)]$.
\end{reduction}

Whereas MANYFOLD works well on dense graphs, the following reduction works on sparse graphs.
\begin{reduction}[4PATH~\cite{fellows2018what}]
If there exists a vertex $v \in V(G)$ such that 
 \begin{itemize}
    \item $N(v) = \{a,b,c,d\}$, and
    \item $E(G[N(v)]) = \{\{a,b\},\{b,c\},\{c,d\}\}$,
\end{itemize}
then replace $G$ by 
\begin{align*}
G   &- v + \{\{a,c\}, \{a,d\}, \{b,d\}\} \\
    &+ \{a,b\}\times N(d) + \{c,d\}\times N(a). 
\end{align*}
\end{reduction}
\begin{algorithm}[t]
\begin{algorithmic}[1]
\algnewcommand{\algorithmicgoto}{\textbf{go to}}%
\algnewcommand{\Goto}[1]{\State \algorithmicgoto~#1}%

\Function{CheckUnconfined}{$v, G$}
\State $S \leftarrow \{v\}$
\State $P \leftarrow  \SB u \in N(S) \SM |N(u) \cap S| = 1\SE$
\If{$P = \emptyset$}
    \State $u \leftarrow  \text{argmin}_{u' \in P} |N(u') \setminus (N(S) \cup S)|$
    \If{$|N(u') \setminus (N(S) \cup S)| = 0$}
        \Return True
        \ElsIf{$|N(u) \setminus (N(S) \cup S)| = 1$}
        \State $S \leftarrow S \cup \{u\}$
        \Goto 3
    \EndIf
\EndIf
\Return False
\EndFunction
\end{algorithmic}
 \caption{An algorithm that checks whether a vertex $v$ is unconfined in a graph $G$.}
\label{alg:unconfined}
\end{algorithm}

While the above reductions all capture a fixed graph pattern, the following applicability of the following one is determined by an iterative procedure in~\Cref{alg:unconfined}.
\begin{reduction}[UNCONFINED~\cite{xiao2013confining,akiba2016branch}]
If there is a vertex $v \in V(G)$ such that \textsc{CheckUnconfined}($v,G$), replace $G$ by $G - v$.
\end{reduction}
All of these reductions induce (many) boundary reductions.
\section{Proofs}\label{sec:proofs}

\begin{theorem}
For every graph $G$ such that $r = \langle H, H', B, c\rangle$ is applicable
it holds that for every minimum vertex cover $S$ of $G + H$ there is a minimum vertex cover $S'$ of $G + H'$ such that
 \begin{itemize}
    \item $|S| = |S'| + c$, and
    \item $S \cap V(G) = S' \cap V(G)$.
\end{itemize}
\end{theorem}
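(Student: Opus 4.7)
My plan is to exploit the decomposition of any VC of $G + V(H) + E(H)$ forced by the overlap $V(G) \cap V(H) = B$. Given a minimum VC $S$, set $A := S \cap B$, $X := B \setminus A$, $S_G := S \cap V(G)$, and $S_H := S \cap V(H)$; then $S = S_G \cup S_H$ with $S_G \cap S_H = A$, so $|S| = |S_G| + |S_H| - |A|$. A standard exchange argument is the first step: if some VC $T^*$ of $H$ containing $A$ and disjoint from $X$ were strictly smaller than $S_H$, then $S_G \cup T^*$ would still be a VC of $G + V(H) + E(H)$ of size $|S_G| + |T^*| - |A| < |S|$ (the $V(G)$-slice is unchanged, both pieces still agree on $B$), contradicting minimality. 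Hence $|S_H|$ is exactly the minimum size of a VC of $H$ subject to the boundary profile $(A, X)$.

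Next, I would construct $S'$ by replacing $S_H$ with a minimum VC $T'$ of $H'$ containing $A$ and disjoint from $X$, and set $S' := S_G \cup T'$. Since $V(G) \cap V(H') = B$ and both $S_G$ and $T'$ meet $B$ exactly in $A$, one gets $S' \cap V(G) = S_G = S \cap V(G)$ and $|S'| = |S_G| + |T'| - |A|$. That $S'$ is a VC of $G + V(H') + E(H')$ is immediate: the $G$-edges are covered by $S_G$, the $H'$-edges by $T'$, and the two pieces agree on $B$.

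The central equality $|S'| = |S| - c$ then reduces, via the two decompositions, to the claim that the boundary-constrained minimum VC sizes of $H$ and $H'$ differ by exactly $c$. This is the main obstacle, since the boundary-reduction condition only provides $VC(H - X) = VC(H' - X) + c$, a statement about the unconstrained VCs of the subgraphs. I would bridge this gap by observing that, under the applicability conditions, $X$ is independent in the relevant subgraphs, so a VC of $H$ disjoint from $X$ is precisely a VC of $H - X$ that must additionally contain the $H$-neighborhood of $X$ (to cover the $X$-incident edges of $H$), and symmetrically for $H'$; the further requirement $A \subseteq T$ contributes the same bookkeeping on both sides. Combining these structural observations with $VC(H - X) - VC(H' - X) = c$ yields the desired equality of the constrained minima. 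Finally, a symmetric swap argument---translating any hypothetically smaller VC of $G + V(H') + E(H')$ back into a VC of $G + V(H) + E(H)$ below $|S|$---confirms that $S'$ is indeed a minimum VC, completing the plan.
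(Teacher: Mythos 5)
Your decomposition $S = S_G \cup S_H$ with $A = S\cap B$, the exchange argument showing that $|S_H|$ is a boundary\hyp{}constrained minimum, and the construction $S' = S_G\cup T'$ are all fine and close in spirit to the paper's proof. The genuine gap is exactly the step you flag as ``the main obstacle'': you need the minimum sizes of VCs of $H$ and of $H'$ \emph{subject to containing $A$ and avoiding $X=B\setminus A$} to differ by $c$, but the hypothesis only controls the unconstrained quantities $VC(H-Y)$ and $VC(H'-Y)$. Your proposed bridge does not close this. A VC of $H$ avoiding $X$ is a VC of $H-X$ that additionally contains $N_H(X)$, so the constrained minimum is $\min\{|W| : W \text{ covers } H-X,\ A\cup N_H(X)\subseteq W\}$; this can exceed $VC(H-X)$ by an arbitrary amount (a star of $H$ centered at a vertex of $X$ already forces all its leaves in), and $N_H(X)$ and $N_{H'}(X)$ are unrelated sets of generally different sizes, so the ``same bookkeeping on both sides'' does not cancel. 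Nothing in the definition of a boundary reduction constrains these neighborhood-forced covers, so this route cannot be completed as sketched.

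The repair --- and the paper's actual argument --- is to instantiate the hypothesis at the \emph{complementary} boundary set, i.e., at $Y = A = S\cap B$ rather than at $B\setminus A$. Every edge of $H$ incident to $A$ is already covered by $A\subseteq S\cap V(G)$, so the interior part $S_\ell = S\cap(V(H)\setminus B)$ must cover all of $H-A$, and it is an \emph{unconstrained} minimum VC of $H-A$: in the exchange you may swap $S_\ell$ for any minimum VC $T$ of $H-A$, even one meeting $B\setminus A$, because $(S\setminus S_\ell)\cup T$ is still a VC of $G+V(H)+E(H)$ (the $A$-incident edges of $H$ remain covered by $A$). Then $VC(H-A)=VC(H'-A)+c$ applies verbatim, yielding an interior cover $S_\ell'$ of $H'-A$ with $|S_\ell'|=|S_\ell|-c$, and $S'=(S\cap V(G))\cup S_\ell'$ finishes the argument with no constrained-minimum bookkeeping at all. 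So your overall architecture is right, but you must switch which subset of $B$ you delete before invoking the defining equation of the boundary reduction.
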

\begin{proof}
So let $S$ be a minimum vertex cover of $G + H$ and $X = B \cap S$. We know that $VC(H - X) = VC(H' - X) + c$ and that $S_l = S \cap (V(H) \setminus B)$ is a minimal vertex cover of $H - X$. Therefore, $|S_l| = VC(H-X)$, which implies that there exists a vertex cover $S_l'$ of $H' - X$ such that $|S_l| = |S_l'| + c$. Then, $S' = S \cap V(G) \cup S_l'$ is a vertex cover of $G + H'$ and it holds that $|S| = |S'| + c$ and $S \cap V(G) = S' \cap V(G)$. 

It remains to show that $S'$ is also minimum. Assume that there was another vertex cover $C$ of strictly smaller cardinality. Then we can use the same steps as above to obtain a vertex cover $C'$ of $G + H$ of strictly smaller cardinality than $S$. This is a contradiction, which implies that $S'$ is a minimum vertex cover.
\end{proof}
\begin{theorem}
Let $G$ be a digraph and $r = \langle H, H', B, c\rangle$ be a boundary VCP reduction and $\Pi(G) = G' + V(H) + E(H)$ such that $r$ is applicable to $G'$. If every vertex $v \in V(H) \setminus B$ only has bi-edges in $G$ and every arc $(u,w) \in E(G)$ is a bi-edge or $|\{u,w\} \cap B| \leq 1$,
then 
\[
DFVS(G) = DFVS(G^*) + c,
\]
where $G^*$ is given by
\begin{align*}
G & - (V(H) \setminus B) - B \times B\\
  & + V(H') + \SB (u,v) \SM \{u,v\} \in E(H')\SE.
\end{align*}
\end{theorem}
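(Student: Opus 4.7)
The plan is to use Lemma~\ref{lem:split} to decouple the DFVS problem into a vertex cover problem on $\Pi(G)$ and a DFVS problem on the non-bi-edge arcs $G - E(\Pi(G))$. The boundary VCP reduction property $VC(H - X) = VC(H' - X) + c$ handles the VC part, and conditions (i) and (ii) will ensure that the arc part is essentially unchanged when we swap $H$ for $H'$.

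First, I would verify the structural claim that $\Pi(G^*) = G' + V(H') + E(H')$. Since $B$ is independent in $G'$ (applicability), the bi-edges of $\Pi(G)$ between $B$-vertices come from $E(H)$, and these are exactly what the $-B\times B$ term in the definition of $G^*$ strips away before the $E(H')$ edges are added as bi-edges. Next, I would show that the non-bi-edge arcs of $G$ and $G^*$ literally coincide: condition (i) forbids any non-bi-edge arc incident to $V(H)\setminus B$, so such arcs survive the removal of $V(H)\setminus B$; condition (ii.b) forbids any non-bi-edge arc with both endpoints in $B$, so none of them is removed by $-B\times B$; and the edges added via $E(H')$ are bi-edges, hence not in $G^* - E(\Pi(G^*))$. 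In particular, cycles in $G-E(\Pi(G))$ and $G^*-E(\Pi(G^*))$ coincide, and their vertex sets lie in $V(G')\supseteq B$.

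Now I would prove $DFVS(G)\ge DFVS(G^*)+c$. Take a minimum DFVS $S$ of $G$, set $X=S\cap B$ and $Y=S\cap(V(H)\setminus B)$. By Lemma~\ref{lem:split}, $S$ is a VC of $\Pi(G)$, so $X\cup Y$ is a VC of $H$, whence $Y$ is a VC of $H-X$. A small but crucial observation: using condition (i), the vertices of $V(H)\setminus B$ lie on no non-bi-edge cycle, so replacing $Y$ by any VC of $H-X$ leaves $S$ a DFVS of $G$; minimality of $S$ then forces $|Y|=VC(H-X)$. Let $Y^*$ be a minimum VC of $H'-X$, so $|Y^*|=VC(H'-X)=|Y|-c$ by the definition of a boundary VCP reduction, and define $S^* = (S\setminus V(H))\cup X\cup Y^*$. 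Using the structural claims above, $S^*$ is a VC of $\Pi(G^*)$ (the $E(G')$ edges are covered because $S^*\cap V(G')=S\cap V(G')$, and $E(H')$ is covered by $X\cup Y^*$) and a DFVS of $G^*-E(\Pi(G^*))$ (since those cycles lie in $V(G')$ and are hit identically by $S$). Thus $S^*$ is a DFVS of $G^*$ of size $|S|-c$. The opposite inequality $DFVS(G)\le DFVS(G^*)+c$ follows by an entirely symmetric argument, lifting a minimum DFVS $S^*$ of $G^*$ to $S$ via a minimum VC of $H-X$.

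The only nontrivial obstacle is the WLOG step asserting that $Y$ can be taken as a minimum VC of $H-X$, and its mirror image; this is where condition (i) is indispensable, because it guarantees that the ``local swap'' inside $H$ cannot accidentally open new non-bi-edge cycles. The rest is bookkeeping: confirming the decomposition $\Pi(G^*)=G'+V(H')+E(H')$ and the invariance of the arc part under both conditions (i) and (ii.b).
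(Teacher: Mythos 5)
Your proposal is correct and follows essentially the route the paper intends: the paper's own proof is only the one-line remark that the argument is ``analogous to Theorem~\ref{thm:vc_boundary}'', and your write-up supplies exactly the missing details --- the decomposition via Lemma~\ref{lem:split}, the identification $\Pi(G^*)=G'+V(H')+E(H')$, the invariance of the non-bi-edge part under conditions (i) and (ii.b), and the swap argument forcing $|Y|=VC(H-X)$. In fact your version is more complete than what the paper prints, and correctly isolates where each hypothesis is used.
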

\begin{proof}
The proof is analogous to that of Theorem~\ref{thm:vc_boundary}.
\end{proof}
\begin{theorem}
Let $G$ be a loop-free digraph, such that MANYFOLD is applicable to $v^{*} \in V(G)$ and $G'$ be the graph obtained from $G$ after MANYFOLD was applied on vertex $v^*$, then 
 \begin{itemize}
    \item $DFVS(G) = DFVS(G') + |C_2|$,
    \item and given a minimum DFVS of $G'$, we can in polynomial time compute a minimum DFVS of $G$.
\end{itemize} 
\end{theorem}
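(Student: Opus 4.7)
The plan is to establish both directions of $DFVS(G) = DFVS(G') + |C_2|$; the polynomial reconstruction of a minimum DFVS of $G$ from one of $G'$ will fall out of the upper-bound construction. A key preliminary I would use throughout is that $G[C_1]$ remains a diclique in $G'$, since the reduction only adds arcs within $V(G')$ and never removes intra-$C_1$ arcs, so any DFVS of $G$ or $G'$ misses at most one vertex of $C_1$, and similarly for $C_2$ in $G$.

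For the upper bound, take a minimum DFVS $S'$ of $G'$ and case-split on $C_1 \cap S'$. If $C_1 \subseteq S'$, put $S := S' \cup C_2$; cycles of $G$ through $v$ are broken because $N(v) = C_1 \cup C_2 \subseteq S$, cycles through $C_2$ are broken directly, and the remaining cycles of $G$ are already cycles of $G'$ and hence broken by $S'$. Otherwise $C_1 \setminus S' = \{c_1^*\}$ for a unique $c_1^*$, and I put $S := S' \cup \{v\} \cup (C_2 \setminus \{c_2(c_1^*)\})$; a cycle of $G$ through $c_2(c_1^*)$ that avoids $v$ and the rest of $C_2$ translates to a cycle of $G'$ by swapping $c_2(c_1^*)$ for $c_1^*$ along the new arcs $(x, c_1^*)$ and $(c_1^*, y)$, and $S'$ must break this $G'$-cycle at a vertex other than $c_1^*$, which also lies in the original $G$-cycle and in $S$. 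In either case $|S| = |S'| + |C_2|$, and the construction is explicit and polynomial-time.

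For the lower bound, take a minimum DFVS $S$ of $G$ and case-split on whether $v \in S$. If $v \notin S$, the 2-cycles $v \leftrightarrow u$ force $N(v) \subseteq S$, and $S' := S \setminus C_2$ has size $|S|-|C_2|$. If $v \in S$, minimality forces $N(v) \not\subseteq S$, so at least one of $c_1^* \in C_1 \setminus S$ and $c_2^* \in C_2 \setminus S$ exists, and by the diclique structure each is at most one vertex. When only $c_2^*$ is missing take $S' := S \setminus \{v\} \setminus C_2$; when only $c_1^*$ is missing take $S' := (S \setminus \{v\} \setminus C_2) \cup \{c_1^*\}$; when both are missing, the 2-cycle $c_1^* \leftrightarrow c_2^*$ cannot be broken by $S$, so at least one arc between them lies in $M$, and the uniqueness of the partner forces $c_2^* = c_2(c_1^*)$, and I take $S' := S \setminus \{v\} \setminus C_2$. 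A direct count gives $|S'| = |S| - |C_2|$ in every sub-case.

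The main obstacle is verifying that $S'$ is a DFVS of $G'$ in the last sub-case, where $c_1^* \notin S'$ and a $G'$-cycle $K$ may traverse a new arc $(c_1^*, y)$ or $(x, c_1^*)$. The plan is to expand such a new arc into the underlying $G$-step through $c_2(c_1^*)$, yielding a $G$-path $P$ between $c_1^*$ and $c_2(c_1^*)$ whose intermediate vertices lie in $V(G') \cap V(K)$. If the closing arc between $c_1^*$ and $c_2(c_1^*)$ is present in $G$, concatenating with $P$ yields a cycle of $G$ that $S$ must break at a vertex distinct from $c_1^*$ and $c_2(c_1^*)$; that vertex lies in $V(G') \cap V(K)$ and hence in $S'$. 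Otherwise the closing arc lies in $M$, and the condition on $M$ forces $P$ itself to be covered (either both arcs between $c_1^*$ and $c_2(c_1^*)$ are in $M$ and no uncovered path between them exists, or the reverse arc is ``straight'' and every uncovered path in the direction of $P$ must use it directly, which $P$ cannot), so $P$ contains a strictly smaller cycle in $G$ whose vertex set lies inside $V(K) \cup \{c_2(c_1^*)\}$, and $S$ breaks it at a vertex in $V(G') \cap V(K) \subseteq S'$. Cycles traversing several new arcs reduce to this analysis, since at most one of the $C_1$-vertices involved can coincide with $c_1^*$; the others lie in $C_1 \setminus \{c_1^*\} \subseteq S'$ and already break $K$.
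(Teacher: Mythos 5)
Your proposal is correct and follows essentially the same route as the paper's proof: both directions proceed by the same case analysis on how a minimum DFVS meets $\{v^*\} \cup C_1 \cup C_2$ (using the diclique structure to force $|S \cap C_i| \geq |C_i|-1$), with the same key step of translating cycles between $G$ and $G'$ through $c_2(c_1^*)$ and deriving a contradiction from the uncovered-path conditions on $M$. The only difference is organizational: you split on $v^* \in S$ and handle the asymmetric sub-cases directly, where the paper first normalizes to two canonical cases by swapping $v^*$ for a missing neighbor.
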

\begin{proof}
The main observation that is used to proof soundness for the MANYFOLD reduction in the VCP case,
is that without loss of generality there are only two possible cases we need to consider. For this, first note that if $v^{*}$ is not contained in a minimum DFVS of $G$, then the vertices in $C_1 \cup C_2 = N^{G}_{bi}(v^{*})$ are. On the other hand, if $v^{*}$ is in a minimum DFVS $S$ of $G$, then still, since $C_1$ and $C_2$ are dicliques, it holds that $|S \cap C_i| \geq |C_i| - 1$ for $i = 1,2$. In fact, if $|S\cap C_i| = |C_i|$ for one of $i = 1,2$ then we can assume that it holds for both $i = 1$ and $i = 2$, since in this case $v^*$ must be in $S$ and we can replace it by the missing vertex without changing the size. Therefore, w.l.o.g. for a minimum DFVS $S$ of $G$ it holds that either
\begin{enumerate}
    \item $S \cap C_i = C_i$ for $i = 1,2$ and $v^{*} \not\in S$, or
    \item $|S \cap C_i| = |C_i| - 1$ for $i = 1,2$.
\end{enumerate}

Assume now, that we are given a minimum DFVS $S$ of $G$ such that 1.\ holds. In this case, we obtain a DFVS of $G'$ as $S' = S \setminus C_2$ and it follows that $DFVS(G) - |C_2| \geq DFVS(G')$. To see that $S'$ is a DFVS of $G'$ observe that every edge that $G'$ has but not $G$ contains a vertex from $C_1 \subseteq S \setminus C_2 = S'$.

If instead we are given a minimum DFVS $S$ of $G$ such that 2.\ holds, then let $c_i \in C_i \setminus S$. In this case, $S' = S \setminus (C_2 \cup \{v^*\})$ is a DFVS of $G'$ and thus $DFVS(G) - |C_2 \setminus \{c_2\}| + |\{v^*\}| = DFVS(G) - |C_2| \geq DFVS(G')$. To see that $S'$ is a DFVS of $G'$, assume the contrary. Then there must be cycle that is not covered by $S'$. 


We know that at at least one of $(c_1,c_2)$ and $(c_2,c_1)$ is in $M$ and proceed by a case distinction on whether both are in $M$ or not.

Case both are in $M$: Then there is no uncovered path between $c_1$ and $c_2$ since the reduction is applicable. Assume there is a (w.l.o.g.) uncovered cycle. Then this cycle must use the vertex $c_1$, since the only added arcs, which do not have a vertex in $S'$ use $c_1$. Furthermore, the cycle must contain an arc between $N^G(c_2)$ and $c_1$ and between $N^G(c_1)$ and $c_1$. If only the latter holds true, then the same cycle is also present in $G-S$. If only the former holds true, then there is an equivalent cycle in $G-S$, where $c_1$ is replaced by $c_2$. Since the cycle is uncovered, the arcs must go into opposite direction, i.e., be of the form $(v_2,c_1)$ and $(c_1, v_1)$ or $(v_1,c_1)$ and $(c_1, v_2)$, where $v_i \in N^G(c_i), i = 1,2$. Assume the first form, then we can transform the uncovered cycle into an uncovered path from $c_1$ to $c_2$ in $G$ by going from $v_2$ to $c_2$ instead of $c_1$. This is a contradiction to the assumption that there are no uncovered paths between $c_1$ and $c_2$. The argument for the latter form is analogous.

Case only $(c_1,c_2)$ is in $M$.
Thus, every uncovered path in $G$ from $c_2$ to $c_1$ uses the arc $(c_2,c_1)$. This implies that a cycle $c_1\dots c_1$ in $G' - S'$ is also a cycle in $G - S$, there is a corresponding cycle with $c_1$ replaced by $c_2$ in $G$, or there is a corresponding cycle $c_1\dots c_2c_1$ in $G-S$. This a contradiction to the assumption that $S$ is DFVS of $G$.

Case only $(c_2,c_1)$ is in $M$.
Thus, every uncovered path in $G$ from $c_1$ to $c_2$ uses the arc $(c_1,c_2)$. This implies that a cycle $c_1\dots c_1$ in $G' - S'$ is also a cycle in $G - S$, there is a corresponding cycle with $c_1$ replaced by $c_2$ in $G$, or there is a corresponding cycle $c_1c_2\dots c_1$ in $G-S$. This a contradiction to the assumption that $S$ is DFVS of $G$.

Thus, it follows that $DFVS(G) \geq DFVS(G') + |C_2|$. 

As for the other direction, let $S'$ be a minimum DFVS of $G'$. Again, we consider two cases, namely $|S' \cap C_1| = |C_1|$ and $|S' \cap C_1| = |C_1|-1$, which are the only possible cases for the cardinality of the intersection.

Case $|S' \cap C_1| = |C_1|$: Then $S = S' \cup C_2$ is a DFVS of $G$, which implies $DFVS(G) \leq DFVS(G') + |C_2|$. To see that $S$ is a DFVS of $G$, observe that $G - S - v^{*}$ and $G' - S'$ are equal and $N(v^{*}) \subseteq C_1 \cup C_2 \subseteq S$.

Case $|S' \cap C_1| = |C_1|-1$: Let $c_1 \in C_1 \setminus S'$ and $c_2 \in C_2$ the unique vertex such that there is no bi-edge between $c_1$ and $c_2$ in $G$. Then $S = S' \cup (C_2 \setminus \{c_2\}) \cup \{v^{*}\}$ is a DFVS of $G$, which implies $DFVS(G) \leq DFVS(G') + |C_2|$. It remains to show that $S$ is a DFVS of $G$. 
We consider two subcases for the number of arcs that use $c_1$ and $c_2$ in $M$.


Case only one of $(c_1,c_2)$ is in $M$ or $(c_2,c_1)$ is in $M$. Assume $(c_1,c_2)$ is in $M$, the other case works analogously.
Thus, every uncovered path in $G$ from $c_2$ to $c_1$ uses the arc $(c_1,c_2)$. Assume that there is a (w.l.o.g.) uncovered cycle in $G-S$. We know that this cycle must use $c_1$ and is thus of the form $c_1\dots c_1$. Furthermore it must use $c2$, which implies that there is an uncovered path from $c_2$ to $c_1$ as a part of the cycle. This the cycle is actually of the form $c_1\dots c_2c_1$.
Then however, there is a corresponding cycle $c_1\dots c_1$ in $G'-S'$, since all the arcs of $c_2$ were added to $c_1$, which is a contradiction.

Case both $(c_1,c_2)$ and $(c_2, c_1)$ are in $M$. Assume that there is a (w.l.o.g.) uncovered cycle in $G-S$. As in the previous case, we know that the cycle must use both $c_1$ and $c_2$, since it is otherwise also a cycle in $G' - S'$. Thus, this cycle gives us an uncovered path from $c_1$ to $c_2$, which is a contradiction to the assumption on $M$.

Since these are the only cases, we are done and $DFVS(G) \leq DFVS(G') + |C_2|$, meaning that overall $DFVS(G) = DFVS(G') + |C_2|$. Furthermore, the constructions used in the proof are possible in polynomial time, which was the second claim of the theorem.
\end{proof}
\begin{theorem}
Let $G$ be a digraph such that 4PATH is applicable to $v^* \in V(G)$ and $G'$ be the graph obtained from $G$ after 4PATH was applied to the vertex $v^*$, then
 \begin{itemize}
    \item $DFVS(G) = DFVS(G')$, and
    \item given a minimum DFVS of $G'$, we can in polynomial time compute a minimum DFVS of $G$.
\end{itemize}
\end{theorem}
\begin{proof}
Let $S$ be a minimum DFVS of $G$. If $N^G(v^*) \subseteq S$, then $S$ is also a DFVS of $G'$ and $DFVS(G) \geq DFVS(G')$, since every added arc uses a vertex from $N^G(v^*)$. Otherwise, we can assume w.l.o.g.\ that $|N^G(v^*) \cap S| = 2$ and $v \in S$, since for every minimum DFVS of $G$, with $|N^G(v^*) \cap S| = 3$, there is a minimum DFVS $S^*$ with $N^G(v^*) \subseteq S^*$ of the same size. Furthermore, there cannot be a DFVS of $G$, which contains less than two elements from $N^G(v^*)$, since the elements in $N^G(v^*)$ form a path of four elements, where every arc is a bi-edge. So let $S$ be a minimum DFVS of $G$ such that $|N^G(v^*) \cap S| = 2$ and $v \in S$. We proceed by case distinction over the two neighbors of $v^*$ that are in $S$.

Case $N(v^*) \cap S = \{b,c\}$: In this case $S' = (S \setminus \{v^*\}) \cup \{a\}$ 
is a DFVS of $G'$ and $DFVS(G) \geq DFVS(G')$. Assume that on the contrary, there is a (w.l.o.g.) uncovered cycle. Then this cycle must use the vertex $d$, since the only added arcs, which do not have a vertex in $S'$ use $d$. Furthermore, the cycle must contain an arc between $N^G(a)$ and $d$ and between $N^G(d)$ and $d$. If only the latter holds true, then the same cycle is also present in $G-S$. If only the former holds true, then there is an equivalent cycle in $G-S$, where $d$ is replaced by $a$. Since the cycle is uncovered, the arcs must go into opposite direction, i.e., be of the form $(v_a,d)$ and $(d, v_d)$ or $(v_d,d)$ and $(d, v_a)$. Assume the first form, then we can transform the uncovered cycle into an uncovered path from $d$ to $a$ in $G$ by going from $v_a$ to $a$ instead of $d$. This is a contradiction to the assumption that there are no uncovered paths between $a$ and $d$. The argument for the latter form is analogous.

Case $N(v^*) \cap S = \{a,c\}$: In this case $S' = (S \setminus \{v^*\}) \cup \{d\}$ is a DFVS of $G'$ and $DFVS(G) \geq DFVS(G')$. The argument showing that $S'$ is a DFVS of $G'$ is analogous to that of the previous case.

Case $N(v^*) \cap S = \{b,d\}$: In this case $S' = (S \setminus \{v^*\}) \cup \{a\}$ is a DFVS of $G'$ and $DFVS(G) \geq DFVS(G')$. The argument showing that $S'$ is a DFVS of $G'$ is analogous to that of the first case.

As for the other direction, let $S'$ be a minimum DFVS of $G'$. If $N^G(v^*) \subseteq S'$, then $S'$ is also a DFVS of $G$ and $DFVS(G) \leq DFVS(G')$. Otherwise, we know that $|N^G(v^*) \cap S'| = 3$, since $G'[N^G(v^*)]$ is a diclique. 

Case $N^G(v^*) \cap S' = \{a,b,c\}$: Then $S = (S' \setminus \{a\}) \cup \{v^*\}$ is a DFVS of $G$ and $DFVS(G) \leq DFVS(G')$. Assume that on the contrary there is a cycle. Then this cycle must contain $a$. However, since $d$ has the same neighbors as $a$ in $G'$ this implies the existence of an equivalent cycle with $a$ replaced by $d$ in $G'$, which is a contradiction to $d \not\in S'$.

Case $N^G(v^*) \cap S' = \{a,b,d\}$: Then $S = (S' \setminus \{a\}) \cup \{v^*\}$ is a DFVS of $G$ and $DFVS(G) \leq DFVS(G')$ by an analogous argument as above.

The other two cases follow from symmetric arguments.
\end{proof}
In order to prove soundness of UNCONFINED, we use the following definitions inspired by~\cite{xiao2013confining}.

For a set $A \subseteq V(G)$ such that $G[A]$ is acyclic let 
\[
N_c(A) = \SB u \in V(G) \SM G[A\cup \{u\}] \text{ is cyclic}\SE,
\]
i.e., the neighbors of any vertex in $A$ such that there is a cycle through $A$ and the vertex. A vertex $u \in N_c(A)$ is called a directed child of $A$ if it has exactly two arcs that are shared with vertices in $A$ (i.e., $|N_{succ}(u) \cap A| + |N_{pred}(u) \cap A| = 2$). The vertices in $A$ that share arcs with $u$ are called its \emph{parents}.
\begin{lemma}
\label{lem:child}
Let $A$ be a set of vertices from $G$ such that
 \begin{itemize}
    \item $G[A]$ is acyclic, and
    \item for every minimum DFVS $S$ of $G$ it holds that $A \cap S = \emptyset$.
\end{itemize}
Then for each directed child $u$ of $A$ no minimum DFVS of $G$ contains all vertices $w \in N(u) \setminus (N_c(A) \cup A)$.
\end{lemma}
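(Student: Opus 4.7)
The plan is to prove the lemma by contradiction. Suppose there is a minimum DFVS $S$ of $G$ that contains every $w \in N(u) \setminus (N_c(A) \cup A)$; we will construct a different minimum DFVS that intersects $A$, contradicting the second assumption on $A$.

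The first step is to strengthen what we know about $S$. By hypothesis $A \cap S = \emptyset$. Now consider any $v \in N_c(A)$: by definition $G[A \cup \{v\}]$ contains a cycle $C_v$, and since $v$ lies outside $A$ while $G[A]$ is acyclic, $C_v$ must pass through $v$. If $v \notin S$, then because $A \cap S = \emptyset$ the whole cycle $C_v$ survives in $G - S$, contradicting that $S$ is a DFVS. Hence $N_c(A) \subseteq S$. Combining this with the contradiction hypothesis gives
\[
N(u) \setminus A \;=\; \bigl(N(u) \cap N_c(A)\bigr) \cup \bigl(N(u) \setminus (A \cup N_c(A))\bigr) \;\subseteq\; S.
\]
In particular, since $u \in N_c(A)$, we have $u \in S$. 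Moreover, because $u$ participates in a cycle inside $G[A \cup \{u\}]$, $u$ must have at least one outgoing arc to a vertex $q \in A$ (the successor of $u$ along that cycle); pick such a $q$ once and for all.

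Next, define $S^{*} := (S \setminus \{u\}) \cup \{q\}$, so that $|S^{*}| \leq |S|$ and $q \in S^{*} \cap A$. It remains to verify that $S^{*}$ is a DFVS. Any cycle of $G$ not through $u$ is already broken by some vertex of $S \setminus \{u\} \subseteq S^{*}$. So consider a cycle $C = u \to v_1 \to v_2 \to \cdots \to v_k \to u$ passing through $u$. The successor $v_1$ of $u$ on $C$ is a neighbor of $u$, hence either $v_1 \in A$ or $v_1 \in N(u) \setminus A$. In the first case $v_1$ is an out-neighbor of $u$ in $A$; by the choice of $q$ together with the fact that $u$ has only the two arcs to $A$ permitted by being a directed child, any such out-neighbor is $q$, and $C$ is broken by $q \in S^{*}$. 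In the second case $v_1 \in N(u) \setminus A \subseteq S$, and since $v_1 \neq u$ we get $v_1 \in S \setminus \{u\} \subseteq S^{*}$, so $C$ is again broken. Therefore $S^{*}$ is a DFVS, and by minimality of $S$ it is itself a minimum DFVS. But $q \in S^{*} \cap A$, contradicting the assumption that every minimum DFVS is disjoint from $A$.

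The main conceptual step I expect to be the key observation is the inclusion $N_c(A) \subseteq S$, since once this is in hand the remainder is a clean swap argument on $u$. A small subtlety in the verification of $S^{*}$ is that $u$ might be joined to $A$ only by a bi-edge with a single parent; but this collapses into the case where $q$ is simultaneously the in- and out-neighbor of $u$ in $A$, and the argument goes through unchanged because the only cycle through $u$ inside $G[A \cup \{u\}]$ is then $u \to q \to u$, which still passes through $q \in S^{*}$.
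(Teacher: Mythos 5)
Your proposal is correct and follows essentially the same route as the paper's proof: you derive $N_c(A) \subseteq S$ from the contradiction hypothesis, conclude $N(u) \setminus A \subseteq S$, and then swap $u$ for a parent in $A$ to obtain an equally small DFVS meeting $A$. The only cosmetic difference is that you fix the out-parent $q$ and verify acyclicity by tracing the successor of $u$ on a hypothetical cycle, whereas the paper picks either parent and argues that $u$ retains at most one incident arc in $G - S'$ and hence lies on no cycle.
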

\begin{proof}
Assume that there is a minimum DFVS $S$ of $G$ such that $S \cap (N(u) \setminus (N_c(A) \cup A)) = (N(u) \setminus (N_c(A) \cup A))$ for some directed child $u \in N_c(A)$. The parents $p_1, p_2$ of $u$ are in $A$ and thus by assumption not in $S$. Furthermore, since $S$ is a DFVS and $A \cap S = \emptyset$, we know that $N_c(A) \subseteq S$ since $N_c(A)$ contains the vertices $v$ such that $G[A\cup \{v\}]$ is cyclic.
It follows that $N(u) \setminus A \subseteq S$ and thus $N(u) \setminus A \subseteq S'$, where
$S' = (S \cup \{p_1\}) \setminus \{u\}$ (or $(S \cup \{p_2\}) \setminus \{u\}$). Recall that $u$ is a directed child of $A$ and therefore only has two arcs that go from or to $A$, which use the parents. Since we put one of $u$'s parents into $S'$ there is at most one arc that uses $u$ in the graph $G - S'$, implying that $u$ cannot be contained in any cycle in $G - S'$.
Thus, the removal of $u$ can be compensated by the addition of $p_1$ (or $p_2$) and we see that $S'$ is a minimum DFVS of $G$, which shares a vertex with $A$. This is a contradiction.
\end{proof}

\begin{theorem}
Let $G$ be a digraph. After applying UNCONFINED to vertex $v$ resulting in $G'$, it holds that for every minimum DFVS $S$ of $G'$ the set $S\cup \{v\}$ is a minimum DFVS of $G$.
\end{theorem}
\begin{proof}
The result follows from Lemma~\ref{lem:child}. The idea of Algorithm~\ref{alg:dunconfined} 
is the following: We start by assuming that $S = \{v\}$ has an empty intersection with any minimum DFVS. If this leads to a contradiction together with Lemma~\ref{lem:child} then $v$ must be contained in some DFVS and the theorem follows.

In particular, we get a contradiction, when there is a directed child $u$ such that $|N(u) \setminus (N_c(A) \cup S)| = 0$. In this case, we know that there is a DFVS of $G$ that contains $v$. If there is no immediate contradiction but there is a $u \in N_c(A)$ with $|N(u) \setminus (N_c(A) \cup S)| = 1$, then we know that the unique vertex $v' \in N(u) \setminus (N_c(A) \cup S)$ must also not be contained in any minimum DFVS of $G$, which means that we can extend $S$ by adding $v'$.  
\end{proof}
}

\bibliographystyle{plainnat}
\bibliography{main.bib}

\end{document}